\documentclass[journal]{IEEEtran}

% -------------------------------------------------------------
% Packages 
% -------------------------------------------------------------
\usepackage[latin1]{inputenc}
\usepackage{graphicx}
\usepackage{amssymb,amsmath,amsfonts,amsthm}
\usepackage{algorithm,algorithmic}
\usepackage{cite}
\usepackage{float}
\usepackage{graphics}
\usepackage{subfigure}
\usepackage{xspace}
\usepackage[usenames,dvipsnames]{color}
\usepackage{amssymb, epsfig, hyperref}
\usepackage{mathtools}
\usepackage{bbm} 
\usepackage{subeqnarray}
\newtheorem{proposition}{Proposition}
\newtheorem{remark}{Remark}

\begin{document}

\title{Simultaneous Information and Energy Transfer in Large-Scale Networks with/without Relaying}

\author{Ioannis Krikidis,~\IEEEmembership{Senior Member,~IEEE}
\thanks{Manuscript received October 20, 2013; revised January 7, 2014. The editor coordinating the review of this paper and approving it for publication was E. Larsson.}
\thanks{I. Krikidis is with the Department of Electrical and Computer Engineering, Faculty of Engineering, University of Cyprus, Nicosia 1678 (E-mail: {\sf krikidis@ucy.ac.cy}).}
\thanks{This work was supported by the Research Promotion Foundation, Cyprus under the project KOYLTOYRA/BP-NE/0613/04 ``Full-Duplex Radio: Modeling, Analysis and Design (FD-RD)''.}}

\maketitle
% ===================================================
%
%
% ABSTRACT
%
%
% =====================================================
\begin{abstract}
Energy harvesting (EH) from ambient radio-frequency (RF) electromagnetic waves is an efficient solution for fully autonomous and sustainable communication networks. Most of the related works presented in the literature are based on specific (and small-scale) network structures, which although give useful insights on the potential benefits of the RF-EH technology, cannot characterize the performance of general networks. In this paper, we adopt a large-scale approach of the RF-EH technology and we characterize the performance of a network with random number of transmitter-receiver pairs by using stochastic-geometry tools.  Specifically, we analyze the outage probability performance and the average harvested energy, when receivers employ power splitting (PS) technique for ``simultaneous'' information and energy transfer. A non-cooperative scheme,  where information/energy are conveyed only via direct links, is firstly considered and the outage performance of the system as well as the average harvested energy are derived in closed form in function of the power splitting. For this protocol, an interesting optimization problem which minimizes the transmitted power under outage probability and harvesting constraints, is formulated and solved in closed form. In addition, we study a cooperative protocol where sources' transmissions are supported by a random number of potential relays that are randomly distributed into the network. In this case, information/energy can be received at each destination via two independent and orthogonal paths (in case of relaying). We characterize both  performance metrics, when a selection combining scheme is applied at the receivers and a single relay is randomly selected for cooperative diversity. 
\end{abstract}

\begin{keywords}
RF energy harvesting, stochastic geometry, Poisson point process, relay channel, power consumption, outage probability. 
\end{keywords}

% =====================================================
%
%
% INTRODUCTION
%
%

\section{Introduction}

Energy efficiency is of paramount importance for future  communication networks and is 
a main design target for all 5G radio access solutions. It refers to an efficient utilization of the available energy and consequently extends the network lifetime and/or reduces the operation cost.  Specifically,  conventional battery-powered communication systems suffer from short lifetime and require periodic replacement or recharging in order to maintain network connectivity. On the other hand, communication systems that are supported by a continuous power supply such as cellular networks require a power grid infrastructure and may result in large energy consumption that will further increase due to the increasing growth of data traffic. The investigation of energy-aware architectures as well as  transmission techniques/protocols that prolong the lifetime of the networks or provide significant energy savings has been a hot research area over several years, often under the umbrella of the green radio/communications \cite{HAN,ISM}.

Due to the limited supply of non-renewable energy resources, recently, there is a lot of interest to integrate the energy harvesting (EH) technology to power communication networks \cite{GUN2}. Energy harvesting is a new paradigm and allows nodes to harvest energy from natural resources (i.e., solar energy, wind, mechanical vibrations etc.) in order to maintain their operation. Related literature concerns the optimization of different network utility functions under various assumptions on the knowledge of the energy profiles. The works in \cite{YEN,GUN} assume that the EH profile is perfectly known at the transmitters and investigate optimal resource allocation techniques for different objective functions and network configurations. On the other hand, the works in \cite{EPH,KRI5} adopt a more networking point of view and maximize the performance in terms of stability region by assuming only statistical knowledge of the EH profile. Although energy harvesting from natural resources is a promising technology towards fully autonomous and self-sustainable communication networks, it is mainly unstable (i.e., weather-dependent) and thus less efficient for applications with critical quality-of-service (QoS) requirements.  

An interesting solution that overcomes the above limitation is to harvest energy from man-made electromagnetic radiation. Despite the pioneering work of Tesla, who experimentally demonstrated wireless energy transfer (WET) in late 19th century, modern wireless communication systems mainly focus on the information content of the radio-frequency (RF) radiation, neglecting the energy transported by the signals. Recently, there is a lot of interest to exploit RF radiation from energy harvesting perspective and use wireless energy transfer in order to power communication devices. The fundamental block for the implementation of this technology is the rectifying-antenna (rectenna) which is a diode-based circuit that converts the RF signals to DC voltage. Several rectenna architectures and designs have been proposed in the literature for different systems and frequency bands \cite{YAN,MON}. An interesting rectenna architecture is reported in \cite{VOL}, where the authors study a rectenna array in order to further boost the harvesting efficiency. Although information theoretic studies ideally assume that a receiver is able to decode information and harvest energy independently from the same signal \cite{GRO,FOU}, this approach is not feasible due to practical limitations. In the seminal work in \cite{RUI1}, the authors introduce two practical RF energy harvesting mechanisms for ``simultaneous'' information and energy transfer: a) time switching (TS) where dedicated time slots are used either for information transfer or energy harvesting, b) power splitting (PS) where one part of the received signal is used for information decoding, while the other part is used for RF energy harvesting.  

The employment of the above two practical approaches in different fundamental network structures, is a hot research topic and several recent works appear in the literature. In \cite{RUI1}, the authors study the problem of beamforming design for a point-to-point multiple-input multiple-output (MIMO) channel and characterize the  rate-energy region for both TS and PS techniques. This work is extended in \cite{XIA} for the case of an imperfect channel information at the transmitter by using robust optimization tools. The work in \cite{RUI2} investigates the optimal PS rule for a single-input single-output  (SISO) channel in order to achieve different trade-offs between ergodic capacity and average harvested energy. An interesting problem is discussed in \cite{RUI3}, where the downlink of an access point broadcasts energy to several users, which then use the harvested energy for  time-division multiple access (TDMA) uplink transmissions.  In \cite{TIM}, the authors study a fundamental multi-user multiple-input single-output (MISO) channel where the single-antenna receivers are characterized by both QoS and PS-EH constraints.  On the other hand,  cooperative/relay networks is a promising application area for  RF energy harvesting, since relay nodes have mainly limited energy reserves and may require external energy assistance. The works in \cite{KRI,NAS,DIN,HIM} deal with the integration of both TS and PS techniques in various  cooperative topologies with/without batteries for energy storage. The simultaneous information/energy transfer for a MIMO relay channel with a separated energy harvesting receiver is discussed in \cite{HIMSU}.

Although several studies deal with the analysis of communication networks with RF energy harvesting capabilities, most of existing work refers to specific (fixed) single/multiple user network configurations. 
Since harvesting efficiency is associated with the interference and thus the geometric distance between nodes, 
a fundamental question is to study RF energy harvesting for large-scale networks by taking into account random node locations. Stochastic-geometry is a useful theoretical tool in order to model the geometric characteristics of a large-scale network and derive its statistical properties \cite{STO, HANG,WEB1}.  Several works in the literature adopt stochastic-geometry in order to analyze the outage probability performance or the transmission capacity for different conventional (without harvesting capabilities) networks e.g., \cite{GAN,DIN3, MOH,HAN2}. Large-scale networks with energy harvesting capabilities are studied in \cite{RAHU,HARP,KWON,KAIB} for different network topologies and performance metrics. These works model the energy harvesting operation as a stochastic process and mainly refer to energy harvesting from natural resources e.g., solar, wind, etc. However, few studies  analyze the behavior of a RF energy harvesting network from a stochastic-geometry standpoint. In \cite{LEE}, the authors study the interaction between primary and cognitive radio networks,  where cognitive radio nodes can harvest energy from the primary transmissions, by modeling node locations as Poisson point processes (PPPs). A cooperative network with multiple transmitter-receiver pairs and a single energy harvesting relay is studied in \cite{DIN2} by taking the spatial randomness of user locations into consideration. The analysis of large-scale RF energy harvesting networks with practical TS/PS techniques, is an open question in the literature.

In this paper, we study the performance of a large-scale network with multiple transmitter-receiver pairs, where transmitters are connected to the power grid, while receivers employ the PS technique for RF energy harvesting.  By using stochastic-geometry, we model the randomness of node locations and we analyze the fundamental trade-off between outage probability performance and average harvested energy. Specifically, we study two main protocols: a) a non-cooperative protocol, and b) a cooperative protocol with orthogonal relay assistance.  In the non-cooperative protocol,  each transmitter simultaneously transfers information and energy at the associated receiver via the direct link. The outage probability of the system as well as the average harvested energy are derived in closed form in function of the power splitting ratio. In addition, an optimization problem which minimizes the transmitted power under some well-defined outage probability and average harvesting constraints, is discussed and closed form solutions are provided.

The cooperative protocol is introduced to show that relaying can significantly improve the performance of the system and achieve a better trade-off between outage probability performance and energy harvesting transfer. Relaying cooperation is integrated in several systems and standards in order to provide different levels of assistance (i.e., cooperative diversity, energy savings, secrecy etc); in this work, relays are used in order to facilitate the information/energy transfer. For the cooperative protocol, we introduce a set of potential dynamic-and-forward (DF) relays, which are randomly distributed in the network according to a PPP; these relays have similar characteristics with the transmitters and are also connected to the power grid. In this case, information and energy can be received at each destination via two independent paths (in case of cooperation). For the relay selection, we adopt a random selection policy based on a sectorized selection area with central angle at the direction of each receiver.  The outage performance of the system for a selection combining (SC) scheme as well as the average harvested energy are analyzed in closed form and validate the cooperative diversity benefits.  Numerical results for different parameter set-up reveal some important observations about the impact of the central angle and relay density on the trade-off between information and energy transfer.   It is the first time, to the best of the authors' knowledge, that stochastic-geometry is used in order to analyze a PS energy harvesting network with/without relaying.

The remainder of this paper is organized as follows. Section \ref{system} describes the system model and introduces the considered performance/harvesting metrics. Section \ref{direct_link} presents the non-cooperative protocol and analyzes its performance in terms of outage probability and average harvested energy. Section \ref{system2} introduces the cooperative protocol and analyzes both performance metrics considered. Simulation results are presented in Section \ref{num}, followed by our conclusions in Section \ref{conc}.

\underline{Notation:} $\mathbb{R}^d$ denotes the $d$-dimensional Euclidean space,  $\mathbf{1}(\cdot)$ denotes the indicator function, $|\cdot|$ is the Lebesgue measure, $b(x,r)$ denotes a two dimensional disk of radius $r$ centered at $x$, $\|x\|$ denotes the Euclidean norm of $x \in \mathbb{R}^d$, $\mathbb{P}(X)$ denotes the probability of the event $X$ and $\mathbb{E}(\cdot)$ represents the expectation operator. In addition, the number of points in $B$ is denoted by $N(B)$.

\section{System model}\label{system}

We consider a 2-D large-scale wireless network consisting of a random number of transmitter-receiver pairs. The transmitters form an independent homogeneous PPP $\Phi_t=\{x_k\}$ with $k\geq 1$ of intensity $\lambda$ on the plane $\mathbb{R}^2$, where $x_k$ denotes the coordinates of the node $k$. Each transmitter $x_k$ has a unique receiver $r(x_k)$ (not a part of $\Phi_t$) at an Euclidean distance $d_0$  in some random direction \cite{GAN}.  All nodes are equipped with single antennas and have equivalent characteristics and computation capabilities. The time is considered to be slotted and in each time slot all the sources are active without any coordination or scheduling process. In the considered topology, we add a transmitter $x_0$ at the origin $[0\; 0]$ and its associated receiver $r(x_0)$ at the location $[d_0\; 0]$  without loss of generality; in this paper, we analyze the performance of this typical communication link but our results hold for any node in the process $\Phi_t \cup x_0$ according to  Slivnyak's Theorem \cite{STO}. 

We assume a {\it partial fading} channel model, where desired direct links are subject to both small-scale fading and large-scale path loss, while interference links are dominated by the path-loss effects.  According to the literature \cite{HANG1,HANG2}, this channel model is denoted as ``1/0 fading'' and serves as a useful guideline for more practical configurations e.g., all links are subject to fading \cite{A,B}.  More specifically, the fading between $x_k$ and $r(x_k)$ is Rayleigh distributed so the power of the channel fading is an exponential random variable with unit variance.  The path-loss model assumes that the received power is proportional to $d(\mathcal{X},\mathcal{Y})^{-\alpha}$ where $d(\mathcal{X},\mathcal{Y})$ is the Euclidean distance between the transmitter $\mathcal{X}$ and the receiver $\mathcal{Y}$, $\alpha>2$ denotes the path-loss exponent and we define $\delta\triangleq\alpha/2$.  The Euclidean distance between two nodes is defined as

\begin{align}
d(\mathcal{X},\mathcal{Y})=\left\{ \begin{array}{l} \|\mathcal{X}-\mathcal{Y})  \|^{-\alpha}\;\;\text{If}\;\;\|\mathcal{X}-\mathcal{Y}  \|>r_0  \\ r_0^{-\alpha}\;\;\;\;\;\;\;\;\;\;\;\;\;\;\;\;\;\;\;\;\text{elsewhere},  \end{array}\right. \label{modelo_apostasis}
\end{align}

\noindent where the parameter $r_0>1$ refers to the minimum possible path-loss degradation and ensures the accuracy of our path-loss model for short distances \cite{HAN2}. The instantaneous fading channels are known only at the receivers in order to perform coherent detection.  In addition, all wireless links exhibit additive white Gaussian noise (AWGN) with variance $\sigma^2$.

The transmitters are continuously connected to a power supply (e.g., battery or power grid) and transmit with the same power $P_t$. On the other hand,  each receiver has RF energy harvesting capabilities and can harvest energy from the received electromagnetic radiation. The RF energy harvesting process is based on the PS technique and therefore each receiver splits its received signal in two parts a) one part is converted to a baseband signal for further signal processing and data detection (information decoding) and b) the other part is driven to the rectenna for conversion to DC voltage and energy storage. Let $\nu_d \in (0,1)$ denote the power splitting parameter for each receiver; this means that $100 \nu_d \%$ of the received power is used for data detection while the remaining amount is the input to the RF-EH circuitry. We assume an ideal power splitter at each receiver without power loss or noise degradation, and that the receivers can perfectly synchronize their operations with the transmitters based on a given power splitting ratio $\nu_d$ \cite{ZHOU1}. During the baseband conversion phase, additional circuit noise, $v$, is present due to phase-offsets and circuits' non-linearities and  which is modeled as AWGN with zero mean and variance $\sigma_C^2$ \cite{RUI1}. Based on the PS technique considered, the signal-to-interference-plus-noise ratio (SINR) at the typical receiver can be written as
\begin{align}
{\sf SINR}_0=\frac{\nu_d P_t h_0 d_0^{-\alpha}}{\nu_d \big(\sigma^2+P_t I_0 \big)+\sigma_C^2}, \label{SNR}
\end{align}

\noindent where $I_0 \triangleq\sum\limits_{x_k \in \Phi_t}d(x_k)^{-\alpha}$ denotes the total (normalized) interference at the typical receiver with $d(x_k)\triangleq d \big(x_k,r(x_0) \big)$ and $h_k$ denotes the channel power gain for the link $x_k\rightarrow r(x_k)$. A successful decoding requires that the received SINR is at least equal to a detection threshold $\Omega$. On the other hand, RF energy harvesting is a long term operation\footnote{In most real-world applications, the received power is very low (scale of dBm); therefore instantaneous harvesting has not practical interest.} and is expressed in terms of average harvested energy \cite{RUI1,RUI2}. Since $100(1-\nu_d)\%$ of the received energy is used for rectification, the average energy harvesting  at the typical receiver is expressed as
\begin{align}
E_0=\zeta\cdot \mathbb{E}\bigg((1-\nu_d)P_t\big[h_0d_0^{-\alpha}+I_0\big] \bigg), \label{mean}
\end{align}

\noindent where $\zeta \in (0,1]$ denotes the conversion efficiency from RF signal to DC voltage; for the convenience of analysis, it is assumed that $\zeta=1$. It is worth noting that the RF energy harvesting from the AWGN noise is considered to be negligible. 

\section{Non-cooperative protocol for simultaneous information/energy transfer}\label{direct_link}

The first investigated scheme does not enable any cooperation between the nodes and thus communication is performed in a single time slot; all the sources simultaneously transmit towards their associated receivers. Fig. \ref{model1} schematically presents the network topology for the non-cooperative case.  The information decoding process is mainly characterized by the outage probability which denotes the probability that the instantaneous SINR is lower than the predefined threshold $\Omega$.  By characterizing the outage probability for the typical transmitter-receiver link $x_0\rightarrow r(x_0)$, we also characterize the outage probability for whole network ($\forall\; x_k\in \{\Phi_t\cup x_0\}$).

\begin{figure}[t]
\centering
\includegraphics[width=\linewidth]{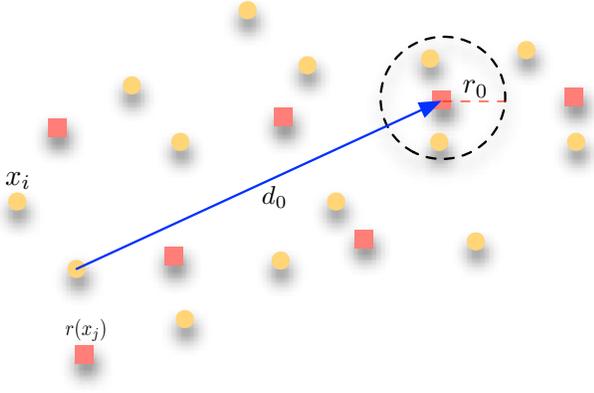}
\vspace{-0.5cm}
\caption{Network topology for the non-cooperative protocol; $d_0$ is the Euclidean distance between a transmitter and its associated receiver and $r_0$ is the minimum path-loss distance.}
\label{model1}
\end{figure}

\begin{proposition}\label{prop1}
The outage probability for the non-cooperative protocol is given by
\begin{align}
\Pi_{\text{NC}}(\nu_d,P_t)&=\mathbb{P}\{{\sf SINR}_0<\Omega \} \nonumber\\
&=1-\exp \left( -\frac{\Omega d_0^{\alpha} \sigma^2}{P_t}-\frac{\Omega d_0^{\alpha} \sigma_C^2}{\nu_d P_t}  \right)\Xi(\lambda,d_0,r_0), \label{pout}
\end{align}
\noindent where 
\begin{align}
\Xi(x,y,z)\triangleq&\exp \Bigg(-\pi x \bigg[ \left(\exp \left(-\Omega y^{\alpha} z^{-\alpha} \right)-1\right) z^2 \nonumber \\
&+\Omega^\delta y^2 \gamma \left(1-\delta,\Omega y^{\alpha} z^{-\alpha} \right) \bigg] \Bigg) \nonumber \\
&\times\exp \left(-2 \pi \Omega x  y^{\alpha} z^{2-\alpha} \right).
\end{align}
\end{proposition}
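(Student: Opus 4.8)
The plan is to integrate out the two sources of randomness in $\mathsf{SINR}_0$ separately: the desired channel gain $h_0$, which is exponential, and the aggregate interference $I_0$, which is a functional of the PPP $\Phi_t$. Independence of $h_0$ and $\Phi_t$ lets me condition on the interference field first.

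First I would turn the outage event into a threshold on $h_0$. Solving $\mathsf{SINR}_0<\Omega$ in \eqref{SNR} for $h_0$ gives
\[
h_0 < \frac{\Omega d_0^{\alpha}}{\nu_d P_t}\bigl[\nu_d(\sigma^2+P_t I_0)+\sigma_C^2\bigr]
= \frac{\Omega d_0^{\alpha}\sigma^2}{P_t}+\frac{\Omega d_0^{\alpha}\sigma_C^2}{\nu_d P_t}+\Omega d_0^{\alpha}I_0 .
\]
Since $h_0$ is unit-mean exponential and independent of $\Phi_t$, the conditional outage probability is $1-\exp(-\text{(threshold)})$. The two deterministic noise terms pull out of the expectation over $\Phi_t$, so
\[
\Pi_{\text{NC}}(\nu_d,P_t)=1-\exp\!\left(-\frac{\Omega d_0^{\alpha}\sigma^2}{P_t}-\frac{\Omega d_0^{\alpha}\sigma_C^2}{\nu_d P_t}\right)\mathbb{E}\!\left[\exp(-\Omega d_0^{\alpha}I_0)\right].
\]
Matching this against \eqref{pout}, the whole problem reduces to showing that the Laplace transform of $I_0$ at $s=\Omega d_0^{\alpha}$ equals $\Xi(\lambda,d_0,r_0)$.

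For that I would invoke the probability generating functional of the homogeneous PPP, using Slivnyak's theorem so that removing the typical transmitter leaves the interferers distributed as a fresh PPP of intensity $\lambda$; stationarity also lets me center the bounded path-loss law \eqref{modelo_apostasis} at the origin. This yields
\[
\mathbb{E}\!\left[\exp(-sI_0)\right]=\exp\!\left(-2\pi\lambda\int_0^{\infty}\bigl(1-e^{-s\,g(r)}\bigr)\,r\,dr\right),\quad g(r)=\begin{cases}r^{-\alpha}, & r>r_0,\\ r_0^{-\alpha}, & r\le r_0.\end{cases}
\]
I would split the radial integral at $r_0$. On the inner disk the path loss is the constant $r_0^{-\alpha}$, so that piece is elementary and contributes the $r_0^2$-dependent factors. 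On the outer region the substitution $t=s\,r^{-\alpha}$ (equivalently $v=r^2$, using $\delta=\alpha/2$) converts the integrand into a power of $t$ times $e^{-t}$; integrating by parts, the finite lower limit produces the \emph{lower} incomplete gamma $\gamma(1-\delta,\Omega d_0^{\alpha}r_0^{-\alpha})$ together with boundary terms, and assembling the inner, outer, and boundary contributions reproduces the exponent of $\Xi(\lambda,d_0,r_0)$, including its $(e^{-\Omega d_0^{\alpha}r_0^{-\alpha}}-1)r_0^2$ and $r_0^{2-\alpha}$ corrections.

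The conditioning on $h_0$ is routine once the threshold is written out. The main obstacle is the outer integral $\int_{r_0}^{\infty}(1-e^{-s r^{-\alpha}})\,r\,dr$: it is improper at infinity (convergent only because $\alpha>2$), has no elementary antiderivative, and must be manipulated so that the finite radius $r_0$ yields the incomplete — rather than complete — gamma function. Keeping track of the boundary terms and matching them to the $r_0$-dependent pieces of $\Xi$ is where the computation has to be done with care.
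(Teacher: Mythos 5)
Your opening moves coincide with the paper's Appendix~\ref{app1}: condition on the interference, use the unit-mean exponential $h_0$ to write the conditional outage probability, pull the two noise terms out, and reduce the whole problem to the Laplace transform $\mathcal{L}_{I_0}(s)$ at $s=\Omega d_0^{\alpha}$; your outer-region computation (PGFL, change of variables, integration by parts producing the lower incomplete gamma) also matches the paper's. The gap is in your final assembly step. With the exact piecewise path-loss law $g$ placed \emph{inside} the PGFL, as you propose, the inner-disk contribution to the exponent is $2\pi\lambda\int_0^{r_0}\bigl(1-e^{-s r_0^{-\alpha}}\bigr)r\,dr=\pi\lambda r_0^2\bigl(1-e^{-s r_0^{-\alpha}}\bigr)$, while integrating the outer piece by parts produces the boundary term $-\pi\lambda r_0^2\bigl(1-e^{-s r_0^{-\alpha}}\bigr)$: these two cancel \emph{exactly}. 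What survives is the clean expression $\mathcal{L}_{I_0}(\Omega d_0^{\alpha})=\exp\bigl(-\pi\lambda\,\Omega^{\delta}d_0^{2}\,\gamma\bigl(1-\delta,\Omega d_0^{\alpha}r_0^{-\alpha}\bigr)\bigr)$, which is \emph{not} $\Xi(\lambda,d_0,r_0)$: it contains neither the term $\bigl(e^{-\Omega d_0^{\alpha}r_0^{-\alpha}}-1\bigr)r_0^{2}$ nor the factor $\exp\bigl(-2\pi\Omega\lambda d_0^{\alpha}r_0^{2-\alpha}\bigr)$. So your sentence claiming that ``assembling the inner, outer, and boundary contributions reproduces the exponent of $\Xi$, including its corrections'' is precisely where the proof breaks down: those contributions cancel rather than survive, and the claimed match to the stated formula cannot be carried out.

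The reason the paper's $\Xi$ retains those pieces is that the paper does \emph{not} push the near-field points through the PGFL. It factors the product over $\Phi_t$ into the points with $\|x-r(x_0)\|>r_0$ and the points inside the disk $b\bigl(r(x_0),r_0\bigr)$, applies the PGFL only to the far-field factor (which is why the integration-by-parts boundary term stays alive in $\Xi$), and evaluates the near-field factor as $\bigl[\exp\bigl(-s r_0^{-\alpha}\bigr)\bigr]^{\mathbb{E}[N(b(0,r_0))]}=\exp\bigl(-s\pi\lambda r_0^{2-\alpha}\bigr)$, i.e., it replaces the Poisson generating function $\mathbb{E}\bigl[t^{N}\bigr]=\exp\bigl(\lambda\pi r_0^{2}(t-1)\bigr)$ by $t^{\mathbb{E}[N]}$. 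That mean-count treatment is what generates the separate $r_0^{2-\alpha}$ exponential in $\Xi$ (the appendix in fact obtains it with coefficient $\pi$, versus $2\pi$ in the proposition statement, an internal inconsistency that underlines how sensitive the stated formula is to this decomposition). Consequently, to prove the proposition as literally stated you must mimic the paper's split-and-mean-count handling of the sub-$r_0$ points; your exact PGFL route, carried out correctly, proves a different (arguably cleaner and exact) expression for the Laplace transform, and the two do not coincide.
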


\begin{proof} 
See Appendix \ref{app1}.
\end{proof}

For high transmitted powers  i.e., $P_t\rightarrow \infty$ and $\nu_d>0$ the system becomes interference limited and the outage probability converges to a constant error floor given by
\begin{align}\label{high_exp}
\Pi_{\text{NC}}^{\infty}\rightarrow 1-\Xi(\lambda,d_0,r_0).
\end{align}
As for the average harvested energy,  by expanding \eqref{mean} we have the following proposition: 

\begin{proposition}\label{prop2}
The average harvested energy for the non-cooperative protocol is given by
\begin{align}
E_{NC}=(1-\nu_d)P_t \big[ d_0^{-\alpha}+\Psi(\lambda) \big], \label{prop2e}
\end{align}
\noindent where

\begin{align}
\Psi(x)\triangleq \pi x r_0^{2-\alpha}\frac{\alpha}{\alpha-2}.
\end{align}
\end{proposition}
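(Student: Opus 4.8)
The plan is to evaluate the expectation in \eqref{mean} directly by exploiting linearity together with the first-moment (Campbell) formula for a PPP, after setting $\zeta=1$. First I would split $E_{NC}=(1-\nu_d)P_t\big(\mathbb{E}[h_0 d_0^{-\alpha}]+\mathbb{E}[I_0]\big)$ into a direct-link contribution and an aggregate-interference contribution, which are then treated independently.

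For the direct-link term, since the fading power $h_0$ is exponentially distributed with unit mean and $d_0$ is deterministic, we immediately get $\mathbb{E}[h_0 d_0^{-\alpha}]=d_0^{-\alpha}$. This reproduces the first bracketed term of \eqref{prop2e} with no further work.

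The substantive step is the interference term. I would write $I_0$ as a sum over the PPP $\Phi_t$ of the bounded path-loss kernel of \eqref{modelo_apostasis} evaluated at the displacement from each interferer to the typical receiver $r(x_0)$. Applying Campbell's theorem converts this into a deterministic integral $\mathbb{E}[I_0]=\lambda\int_{\mathbb{R}^2} g(\|x\|)\,dx$, where translation invariance lets me center the integral at $r(x_0)$ and $g$ denotes the path-loss profile (equal to $r_0^{-\alpha}$ inside the disk of radius $r_0$ and $\|x\|^{-\alpha}$ outside). Passing to polar coordinates and splitting the radial integral at $r_0$, the inner region yields $\pi r_0^{2-\alpha}$, while the outer region gives $2\pi\int_{r_0}^\infty \rho^{1-\alpha}\,d\rho = 2\pi r_0^{2-\alpha}/(\alpha-2)$. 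Summing these and factoring out $\pi\lambda r_0^{2-\alpha}$ produces $\pi\lambda r_0^{2-\alpha}\,\alpha/(\alpha-2)=\Psi(\lambda)$, which matches \eqref{prop2e}.

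The only delicate point, and the reason the result is finite, is the convergence of the radial integral. Because $\alpha>2$, the tail $\int_{r_0}^\infty \rho^{1-\alpha}\,d\rho$ converges; and because the path-loss kernel is capped at $r_0^{-\alpha}$ near the origin rather than blowing up like $\rho^{-\alpha}$, the inner integral is finite as well. Thus the bounded path-loss model of \eqref{modelo_apostasis} is precisely what guarantees a well-defined average harvested energy, since a purely singular kernel would make $\mathbb{E}[I_0]$ diverge at short range. I expect this finiteness/convergence argument, rather than any of the elementary integrations, to be the main issue worth emphasizing.
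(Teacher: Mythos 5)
Your proposal is correct and follows essentially the same route as the paper: linearity splits $E_{NC}$ into the direct term $\mathbb{E}[h_0]d_0^{-\alpha}=d_0^{-\alpha}$ and the interference term, and Campbell's theorem with the bounded path-loss kernel of \eqref{modelo_apostasis}, split at radius $r_0$, gives $\mathbb{E}(I_0)=\pi\lambda r_0^{2-\alpha}+2\pi\lambda r_0^{2-\alpha}/(\alpha-2)=\Psi(\lambda)$ exactly as in Appendix~\ref{app2}. Your added emphasis on why the integral converges (the cap at $r_0$ near the origin and $\alpha>2$ in the tail) is a sound observation but not a departure from the paper's argument.
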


\begin{proof}
From \eqref{mean}, we have: 
\begin{align}
E_{NC}&=\mathbb{E}\bigg((1-\nu_d)P_t\big[ h_0 d_0^{-\alpha}+I_0\big] \bigg) \nonumber \\
&=(1-\nu_d)P_t \bigg( d_0^{-\alpha} \mathbb{E}(h_0)+\mathbb{E}(I_0)\bigg) \nonumber \\
&=(1-\nu_d)P_t \big[ d_0^{-\alpha}+\Psi(\lambda) \big],
\end{align}
\noindent where $\mathbb{E}(h_k)=1$ for all $k$, and the proof of $\mathbb{E}(I_0)=\Psi(\lambda)$ can be found in Appendix \ref{app2}.
\end{proof}

\subsubsection{Optimization problem- minimum transmitted power} \label{optimization_sec}

An interesting optimization problem is formulated when energy becomes a critical issue for the network and each receiver is characterized by both QoS and RF energy harvesting constraints. Due to the symmetry of the nodes, the minimization of the transmitted power for the typical transmitter, it also minimizes the total energy consumption for whole network. The optimization problem considered can be written as
\begin{align}
&\min_{P_t,\nu_d} P_t \nonumber \\
&\text{subject to}\; \Pi_{\text{NC}} \leq C_I, \nonumber \\
&\;\;\;\;\;\;\;\;\;\;\;\;\;\;\;\;E _{\text{NC}}\geq C_H, \nonumber \\
&\;\;\;\;\;\;\;\;\;\;\;\;\;\;\;\;0\leq \nu_d \leq 1, \nonumber \\
&\;\;\;\;\;\;\;\;\;\;\;\;\;\;\;\;P_t\geq 0, \label{optimization}
\end{align}
\noindent where the QoS constraint ensures an outage probability lower than a threshold $C_I$, while the RF energy harvesting constraint requires an average harvested energy at least equal to  $C_H$ (i.e., it represents the minimum required energy to maintains operability at each device). For the case where the power splitting ratio is constant i.e., $\nu_d=\nu_0$, the solution to the optimization problem in \eqref{optimization} is simplified as follows:

\begin{align}
P_t^*=\left\{ \begin{array}{l} \max\left[\frac{G_1}{(1-\nu_0)}, \left(G_2+\frac{G_3}{\nu_0} \right)/G_0 \right] \;\text{If}\; \Pi_{\text{NC}}^{\infty}\leq C_I \\
\text{Infeasible},\;\;\text{elsewhere} \end{array} \right.
\end{align}

\noindent where $G_0\triangleq \ln\left(\frac{\Xi(\lambda,d_0,r_0)}{1-C_I}\right)$, $G_1\triangleq C_H/[d_0^{-\alpha}+\Psi(\lambda)]$, $G_2\triangleq \Omega d_0^2 \sigma^2$ and $G_3\triangleq\Omega d_0^2 \sigma_C^2$. The asymptotic expression in \eqref{high_exp}  is involved in the optimization problem and determines its feasibility. More specifically, if the outage probability floor in \eqref{high_exp} is higher than the outage probability constraint $C_I$, there is not any transmitted power that can satisfy $C_I$ and therefore the optimization problem becomes infeasible. The constant power splitting case corresponds to a low implementation complexity and is appropriate for (legacy) systems where the rectenna's design is predefined and the power splitting parameter is not adaptable. For the general case, where both $P_t$ and $\nu_d$ are adjustable, it can be easily seen that the two main constraints are binding at the solution\footnote{Problem in \eqref{optimization} requires at least one of the constraints to be binding, otherwise the value of $P_t$ can further be reduced. By examining the cases where one constraint is binding and the other holds with inequality, we show that at the optimal solution the inequality constraint holds with equality.}. In this case, the optimization problem is transformed to the solution of a standard quadratic equation and for $\Pi_{\text{NC}^{\infty}}\leq C_I$ the solution is given by 
\begin{align}
&P_t^*=\frac{G_1}{(1-\nu_d^*)}, \nonumber \\
&\nu_d^*=\frac{-(G_0G_1+G_2-G_3)+\sqrt{(G_0G_1+G_2-G_3)^2+4G_2G_3}}{2G_2}. 
\end{align}
\noindent We note that the optimization problem is infeasible  for $P_{\text{NC}}^{\infty}> C_I$. The general optimization problem requires adaptive and dynamic RF power splitting and therefore refers to a higher implementation complexity.  

The implementation problem can be solved either by a central controller or in a distributed fashion. In the first case, a central unit that controls the network, solves the problem and broadcasts the common solution (transmitted power, power splitting ratio) to all nodes; the transmitters and the receivers adjust their transmitted power and the power splitting ratio, respectively. In the distributed implementation, each node can locally solve the optimization problem without requiring external signaling (but with the cost of a higher computational complexity). The optimization problem involves only deterministic and average system parameters such as geometric distances, network density, path-loss exponent and channel statistics; these parameters are estimated at the beginning of the communication and remain constant for a long operation time.

\begin{figure}[t]
\centering
\includegraphics[width=0.85\linewidth]{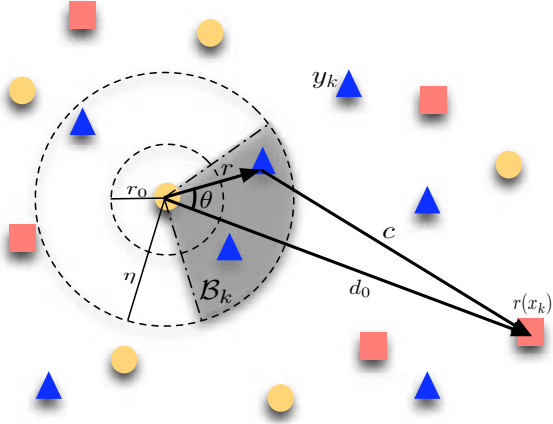}
\vspace{-0.5cm}
\caption{Network topology for the cooperative protocol; $\eta$ is the radius of the selection sector.}
\label{model2}
\end{figure}

\section{Cooperative protocol for simultaneous information/energy transfer}\label{system2}

The cooperative scheme exploits the relaying/cooperative concept in order to combat fading and  path-loss degradation effects. The network topology considered is modified by adding a group of single-antenna DF relays, which have not their own traffic and are dedicated to assist the transmitters. Fig. \ref{model2} schematically presents the network topology for the cooperative protocol. The location of all relay nodes are modeled as a homogeneous PPP denoted by $\Phi_r=\{y_k\}$ with density $\lambda_r$; this assumption refers to mobile relays where their position as well as their ``availability'' changes with the time. The relay nodes are also continuously connected to a power supply (e.g., battery) and have equivalent computation/energy capabilities.  We adopt an orthogonal relaying protocol where cooperation is performed in two orthogonal time slots \cite{GAN,MOH}. It is worth noting that although several cooperative schemes have been proposed in the literature e.g., \cite{AZA}, the orthogonal relaying protocol has a low complexity and is sufficient for the purposes of this work. The cooperative protocol operates as follows:

\begin{enumerate}
\item The first phase of the protocol is similar to the non-cooperative scheme and thus all transmitters simultaneously broadcast their signals towards the associated receivers. Each transmitter  $x_k$ defines a 2-D relaying area $\mathcal{B}_k$ around its location and each relay node located inside this area is dedicated to assist this transmitter; this means that all relays $y_i \in \mathcal{B}_k$ consider the signal generated by $x_k$ as a useful information and all the other signals as interference.  In accordance to the general system model, we assume that direct links suffer from both small-scale fading and path-loss, while interference links are dominated by the path-loss attenuation  (1/0 partial fading \cite{HANG1}). By focusing  our study on the typical transmitter $x_0$, we define as $g_k$ the power fading gain for the link $x_0\rightarrow y_k$  with $y_k \in \mathcal{B}_0$. In this case, the direct link $x_0\rightarrow r(x_0)$ is characterized  by  \eqref{SNR}, \eqref{mean}, while the SINR at the relay $y_k$ is written as

\begin{align}
{\sf SINR}_k=\frac{P_{t}g_kd(x_0,y_k)^{-\alpha}}{\sigma^2+P_t I_k},
\end{align}     
       
\noindent where $I_k=\sum_{x\in \Phi_t}d(x,y_k)^{-\alpha}$ denotes the total (normalized) interference received at $y_k$. If the relay node $y_k$ can decode the transmitted signal, which means that ${\sf SINR}_k\geq \Omega$, it becomes a member of the transmitter's potential relay. It is worth noting that the relay nodes use all the received signal for information decoding, since they have not energy harvesting requirements. 

\item In the second phase of the protocol, one relay node (if any) that successfully decoded the transmitted signal accesses the channel and retransmits the source's signal. We assume a {\it random selection} process which selects a single relay out of all potential relays with equal probability. The random relay selection does not require any instantaneous channel feedback or any instantaneous knowledge of the geometry and is appropriate for low complexity implementations with strict energy constraints  \cite{MOH}. More sophisticated relay selection policies, which take into account the instantaneous channel conditions \cite{GAN,NGHI}, can also be considered in order to further improve the cooperative benefits. We define as $y^*$ the selected relay for the typical transmitter $x_0$, $f$ is the channel power gain for the link $y^*\rightarrow r(x_0)$ and $\Phi_{r}^*$ is the homogeneous PPP that contains all the selected relays for whole network.  If the potential relay set is empty for a specific transmitter (no relay was able to decode the source's  transmitted signal), its message is not transmitted during the second phase of the protocol and therefore does not enjoy cooperative diversity benefits. The relaying link for the (typical) receiver is characterized by the following equations

\begin{align}
&{\sf SINR}_0'=\frac{b \nu_r P_r f d\big(y^*,r(x_0) \big)^{-\alpha}}{\nu_r \big(\sigma^2+P_r I_0' \big)+\sigma_C^2},\;\;\;\text{when}\; b=1, \\
&E_0'=\zeta\cdot \mathbb{E}\bigg((1-\nu_r)P_r \big[b fd(y^*,r(x_0))^{-\alpha}+I_0' \big] \bigg), \label{mean2}
\end{align}

\noindent where $I_k'=\sum_{y \in \Phi_r^*}d(y,r(x_k))^{-\alpha}$ denotes the total (normalized) interference at the receiver $r(x_k)$, $\nu_r \in (0,\;1)$ is the power splitting ratio used in the second phase of the protocol, $P_r$ is the transmitted power for each active relay and the binary variable $b \in \{0,1\}$ is equal to one in case of a relaying transmission, while it takes the value zero when the relay set is empty. A perfect synchronization between the selected relay and the associated receiver is assumed for a given power splitting ratio $\nu_r$. As for the decoding process at the receivers, we assume that the two copies of the transmitted signal are combined with a simple SC technique; this means that information decoding is based on the best path between direct/relaying links \cite{NGHI}. It is well known that SC only requires relative SINR measurements and thus it is simpler than maximum ratio combiner, which requires exact knowledge of the channel state information for each diversity branch. In addition, SC significantly reduces power consumption because continuous estimates of the channel state information are not necessary; this is beneficial for the considered RF energy harvesting system, where energy saving is a critical requirement e.g., \cite{HUBE,CHAU,SELV}.  Regarding the potential use of the non-selected branch for RF energy harvesting, here, we assume a simple/conventional implementation and the received energy allocated to the SC cannot be used for RF energy harvesting purposes. On the other hand, the energy harvesting process exploits both transmission phases and the total average energy harvested becomes equal to
\begin{align}
E_{\text{CO}}=E_0+E_0'.\label{total}
\end{align}
\end{enumerate}

The considered random relay selection process does not require any instantaneous channel feedback and is appropriate for scenarios with critical energy/computation constraints. However, the definition of the  selection area $\mathcal{B}_k$ has a significant impact on the system performance. In this work, we assume that  $\mathcal{B}_k$ is a circular sector\footnote{The considered cooperative protocol assumes that the selection sectors have not overlaps and therefore a relay can be inside into a single selection sector.  Although this assumption simplifies our analysis, the work in \cite[Sec. II.B]{LEE} shows that for practically small $\lambda$ and $\eta$,  circular discs around the different transmitters do not overlap at most of the time. In our case, we have circular sectors and therefore the probability of overlapping becomes much lower.} with center $x_k$, radius $\eta>r_0$ and central angle with orientation at the direction of the receiver $r(x_k)$. 

\begin{remark}
By appropriately adjusting the central angle of the sector, we can ensure that the relaying paths are shorter than the direct distance $d_0$; this parameterization avoids scenarios where the selected relay experiences more serious path-loss effects than the direct link. It is proven in Appendix \ref{angle_sec} that a selection area $\mathcal{B}_k=\big\{r \in [0\; \eta],\; \theta \in [-\theta_0\;\theta_0]\big\}$ with  $\theta_0\leq \cos^{-1}(\eta/(2d_0))$ satisfies this requirement. 
\end{remark}

\subsection{Outage probability and average harvested energy}

For the cooperative protocol, an outage event occurs when (a)  the direct link is in outage and no relay is able to decode the source's message or (b) a relay node is able to decode the transmitted signal but both direct and relaying link are in outage. Based on these two cases, we have the following proposition: 

\begin{proposition}
The outage probability for the cooperative protocol is given by 
\begin{align}
\Pi_{\text{CO}}(\nu_d,\nu_r,P_t,P_r)&=\underbrace{\Pi_{\text{NC}}(\nu_d,P_t)\cdot  \Pi_c(P_t)}_{\text{case (a)}} \nonumber \\
&+\underbrace{ \big(1-\Pi_c(P_t) \big)\cdot \Pi_{\text{NC}}(\nu_d,P_t)\cdot \Pi_r(\nu_r,P_r)}_{case (b)} \label{out_expr}
\end{align}
where
\begin{align}
&\Pi_c(P_t) \nonumber \\
&=\exp\left(-\lambda_r\left[\int_{-\theta_0}^{+\theta_0}\int_{r_0}^{\eta} \exp\left(-\frac{\sigma^2 \Omega r^{\alpha}}{P_t} \right)\Xi(\lambda,r,r_0)r dr d\theta \right. \right. \nonumber \\
&\;\;\;\;\;\;\;\;\;\;\;\;\;\;\;\;\;\;\;\;\;\;\left. \left. +\theta_0 r_0^2 \exp\left(-\frac{\sigma^2 \Omega r_0^{\alpha}}{P_t} \right)\Xi(\lambda,r_0,r_0) \right]  \right), \nonumber \\
&\rightarrow \underbrace{\exp\left(-\lambda_r\left[\int_{-\theta_0}^{+\theta_0}\int_{r_0}^{\eta} \Xi(\lambda,r,r_0)r dr d\theta+\theta_0 r_0^2 \Xi(\lambda,r_0,r_0) \right]  \right)}_{\Pi_c^{\infty},\;\text{for}\; P_t\rightarrow \infty  },\label{prob_fst} \\
&\Pi_r(\nu_r,P_r)=1-\frac{1}{\theta_0(\eta^2-r_0^2)}\int_{-\theta_0}^{\theta_0}\int_{r_0}^{\eta}\exp\left(-\frac{\sigma^2 \Omega c^{\alpha}}{P_r} \right) \nonumber \\
&\;\;\;\;\;\;\;\times \exp\left(-\frac{\sigma_C^2 \Omega c^{\alpha}}{\nu_r P_r} \right) \Xi(\lambda [1-\Pi_c], c, r_0)r dr d\theta,  \nonumber \\
&\;\;\;\rightarrow \underbrace{1-\frac{1}{\theta_0(\eta^2-r_0^2)}\int_{-\theta_0}^{\theta_0}\int_{r_0}^{\eta}\Xi(\lambda [1-\Pi_c], c, r_0)r dr d\theta}_{\Pi_r^{\infty},\;\text{for}\; P_r\rightarrow \infty},
\end{align}
\end{proposition}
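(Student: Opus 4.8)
The plan is to decompose the cooperative outage event into the two disjoint scenarios (a) and (b) stated before the proposition, to compute each probability separately, and to use the fact that with selection combining the typical receiver is in outage precisely when \emph{every} link available to it falls below $\Omega$. First I would condition on the event $\mathcal{E}$ that the potential relay set of $x_0$ is empty, i.e. that ${\sf SINR}_k<\Omega$ for every $y_k\in\mathcal{B}_0$. On $\mathcal{E}$ only the direct link is present, so case (a) is $\{{\sf SINR}_0<\Omega\}\cap\mathcal{E}$; on $\mathcal{E}^c$ both the direct and the relaying link exist, so case (b) is $\mathcal{E}^c\cap\{{\sf SINR}_0<\Omega\}\cap\{{\sf SINR}_0'<\Omega\}$. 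Since these two events are disjoint, the cooperative outage probability is their sum, which already matches the two-term structure of \eqref{out_expr}.

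Next I would evaluate $\mathbb{P}(\mathcal{E})=\Pi_c(P_t)$. The relays that successfully decode form a PPP obtained by thinning $\Phi_r$ with the per-location success probability $\mathbb{P}({\sf SINR}_k\ge\Omega)$, so $\mathbb{P}(\mathcal{E})$ is the void probability $\exp(-\lambda_r\int_{\mathcal{B}_0}\mathbb{P}({\sf SINR}_k\ge\Omega)\,r\,dr\,d\theta)$. For a relay at $(r,\theta)$ the event $\{{\sf SINR}_k\ge\Omega\}$ reduces, after integrating out the exponential fading $g_k$, to a Laplace-transform-of-interference computation identical in form to the one in the proof of Proposition~\ref{prop1} but without circuit noise and without power splitting; this produces the factor $\exp(-\sigma^2\Omega r^\alpha/P_t)\,\Xi(\lambda,r,r_0)$. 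Splitting the sector into the annulus $r\in[r_0,\eta]$ and the inner disk $r\le r_0$ (where the path loss is capped, so the success probability is the constant $\exp(-\sigma^2\Omega r_0^\alpha/P_t)\Xi(\lambda,r_0,r_0)$ over an area $\theta_0 r_0^2$) gives exactly the bracketed expression in $\Pi_c(P_t)$, and letting $P_t\to\infty$ yields $\Pi_c^\infty$.

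Then I would pass from the disjoint decomposition to the product form by treating the direct-link outage as independent of $\mathcal{E}$ and of the relaying-link outage: conditioned on the interferer field $\Phi_t$, the fadings $h_0$, $\{g_k\}$ and $f$ are mutually independent, so these outage events factorize once the residual spatial correlation through the shared $\Phi_t$ is dropped. This delivers $\mathbb{P}(\text{case a})=\Pi_{\text{NC}}(\nu_d,P_t)\,\Pi_c(P_t)$ and $\mathbb{P}(\text{case b})=(1-\Pi_c(P_t))\,\Pi_{\text{NC}}(\nu_d,P_t)\,\Pi_r(\nu_r,P_r)$. It remains to compute $\Pi_r(\nu_r,P_r)=\mathbb{P}({\sf SINR}_0'<\Omega\mid\mathcal{E}^c)$. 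Under random selection I would model the location of $y^*$ as uniform over the sector, giving the normalization $1/[\theta_0(\eta^2-r_0^2)]$; for a selected relay at distance $c$ the event $\{{\sf SINR}_0'\ge\Omega\}$ again integrates the fading $f$ against the Laplace transform of the second-phase interference $I_0'$, producing the noise factors $\exp(-\sigma^2\Omega c^\alpha/P_r)\exp(-\sigma_C^2\Omega c^\alpha/(\nu_r P_r))$ together with an interference factor $\Xi$. The one new ingredient is the intensity in this $\Xi$: the second-phase interferers are the network-wide selected relays $\Phi_{r}^*$, and since each of the $\lambda$-intensity transmitters activates exactly one relay with probability $1-\Pi_c$, the process $\Phi_{r}^*$ has intensity $\lambda(1-\Pi_c)$, which yields $\Xi(\lambda[1-\Pi_c],c,r_0)$ and hence the stated $\Pi_r$ and its limit $\Pi_r^\infty$.

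The hard part will be the independence step used to obtain the product form: the direct-link, relay-decoding and relaying-link outages all depend on the common interferer process $\Phi_t$, so they are only conditionally independent given $\Phi_t$, and the factorization in \eqref{out_expr} is exact only after the inter-link interference correlation is neglected. A secondary difficulty is justifying the uniform-location model for $y^*$ and the reduced-intensity characterization $\lambda(1-\Pi_c)$ of $\Phi_{r}^*$; both rely on the assumption (already made) that the selection sectors essentially do not overlap, so that distinct transmitters decode and select relays independently.
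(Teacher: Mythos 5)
Your proposal is correct and follows essentially the same route as the paper: the disjoint decomposition into cases (a)/(b), the thinning-plus-void-probability computation of $\Pi_c$ (the paper's Appendix~\ref{app_c1}), and the uniform-relay-location average with second-phase interferer intensity $\lambda(1-\Pi_c)$ for $\Pi_r$ (Appendix~\ref{app_c2}) are exactly the paper's steps. The independence step you flag as the hard part is not resolved in the paper either --- it writes the product form in \eqref{out_expr} without comment --- so your explicit acknowledgement that the factorization neglects the correlation of the three outage events through the shared field $\Phi_t$ is, if anything, more careful than the original.
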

\noindent with $c=\sqrt{r^2+d_0^2-2rd_0\cos(\theta)}$.
\begin{proof}
See Appendix \ref{app_c1} for the outage probability  of the first hop ($\Pi_c(\cdot)$)  and Appendix \ref{app_c2} for the outage probability  of the second (relaying) hop ($\Pi_r(\cdot)$).
\end{proof}

We note that for the case where $P_t,P_r\rightarrow \infty$ or $\sigma^2,\sigma_C^2\rightarrow 0$ with $\nu_d,\nu_r>0$, the system becomes interference limited and the outage performance converges to a constant outage probability floor given by
\begin{align}
\Pi_{\text{CO}}^{\infty}\rightarrow \Pi_{\text{NC}}^{\infty}\Pi_c^{\infty}+(1-\Pi_c^{\infty})\Pi_{\text{NC}}^{\infty}\Pi_r^{\infty}.
\end{align}
On the other hand, each receiver harvests energy from both phases of the cooperative protocol. In contrast to the information decoding, which highly depends on the relaying transmission and thus becomes inactive  in case of an empty relay set, the RF harvesting process is active in all cases. More specifically, in case where the relay set is empty (no relay reforwards the source's message), the corresponding receiver does not employ a PS technique and uses all the received energy (e.g., interference) for RF energy harvesting.  Based on this fundamental remark, we have the following proposition

\begin{proposition}
The average harvested energy for the cooperative protocol is given by
\begin{align}
E_{\text{CO}}&=(1-\nu_d)P_t \big[d_0^{-\alpha}+\Psi(\lambda) \big]+\Pi_c \cdot P_r \Psi \big(\lambda[1-\Pi_c] \big) \nonumber \\
&\;\;\;\;\;+(1-\Pi_c) \cdot(1-\nu_r)P_r \bigg[\mathcal{Z}+\Psi(\lambda [1-\Pi_c]) \bigg],
\end{align}
where $\mathcal{Z}\approx\bigg(\left(\frac{r_0+\eta}{2}\right)^2+d_0^2-2d_0 \frac{r_0+\eta}{2}\cdot \frac{\sin(\theta_0)}{\theta_0}  \bigg)^{-\delta}$ denotes the average attenuation for the relaying link. 
\end{proposition}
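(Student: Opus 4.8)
The plan is to start from the energy decomposition \eqref{total}, $E_{\text{CO}}=E_0+E_0'$, and to handle the two phases separately. The first-phase contribution $E_0$ is exactly the non-cooperative harvested energy, so Proposition~\ref{prop2} immediately gives $E_0=(1-\nu_d)P_t\big[d_0^{-\alpha}+\Psi(\lambda)\big]$, which is the first summand. Everything else reduces to evaluating the relaying contribution $E_0'$ defined in \eqref{mean2} with $\zeta=1$.

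First I would condition on whether the typical receiver's potential relay set is empty. By the definition of $\Pi_c$ this occurs with probability $\Pi_c$ (then $b=0$), while with probability $1-\Pi_c$ a relay is selected and retransmits ($b=1$). A preliminary structural fact I would establish is that the set $\Phi_r^*$ of active relays over the whole network is a homogeneous PPP of intensity $\lambda[1-\Pi_c]$: by independent thinning each of the intensity-$\lambda$ transmitters contributes (after displacement inside its sector) exactly one active relay, present with probability $1-\Pi_c$. Campbell's theorem then gives the mean interference $\mathbb{E}(I_0')=\Psi(\lambda[1-\Pi_c])$ by the same integral as in Appendix~\ref{app2} but with intensity $\lambda[1-\Pi_c]$; by Slivnyak's theorem this value is unchanged whether or not the typical relay is itself active, so it applies in both conditioning branches.

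Next I would assemble the two conditional harvested energies. When the relay set is empty the receiver performs no power splitting and harvests the full received field, which now contains only interference; this contributes $P_r\,\mathbb{E}(I_0')=P_r\Psi(\lambda[1-\Pi_c])$ and, weighted by $\Pi_c$, yields the second summand. When a relay is selected, power splitting with ratio $\nu_r$ applies and the harvested energy equals $(1-\nu_r)P_r\big[\mathbb{E}(f)\,\mathbb{E}\big(d(y^*,r(x_0))^{-\alpha}\big)+\mathbb{E}(I_0')\big]$; using $\mathbb{E}(f)=1$, $\mathbb{E}(I_0')=\Psi(\lambda[1-\Pi_c])$ and weighting by $1-\Pi_c$ produces the third summand, provided the relaying path-loss factor is identified as $\mathcal{Z}\triangleq\mathbb{E}\big(d(y^*,r(x_0))^{-\alpha}\big)$.

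The hard part will be the evaluation of $\mathcal{Z}$. Since the random selection places $y^*$ uniformly in the annular sector $\mathcal{B}_0=\{r\in[r_0,\eta],\,\theta\in[-\theta_0,\theta_0]\}$, the relay-to-receiver distance is $c=\sqrt{r^2+d_0^2-2rd_0\cos\theta}$ and the exact value is
\[
\mathcal{Z}=\frac{1}{\theta_0(\eta^2-r_0^2)}\int_{-\theta_0}^{\theta_0}\!\int_{r_0}^{\eta}\big(r^2+d_0^2-2rd_0\cos\theta\big)^{-\delta}\,r\,dr\,d\theta ,
\]
where I used $d(\cdot)^{-\alpha}=(c^2)^{-\delta}$ with $\delta=\alpha/2$. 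This double integral has no elementary closed form, so I would resort to a mean-value substitution: replace the radius by its midpoint $\tfrac{r_0+\eta}{2}$ and the angular factor $\cos\theta$ by its mean $\tfrac{1}{2\theta_0}\int_{-\theta_0}^{\theta_0}\cos\theta\,d\theta=\tfrac{\sin\theta_0}{\theta_0}$ inside $c^2$, and then apply the map $x\mapsto x^{-\delta}$. This produces the stated approximate expression for $\mathcal{Z}$ and completes the proof. The main subtlety to flag is the accuracy of pulling the expectation through the nonlinear $(\cdot)^{-\delta}$, which is reasonable precisely in the regime enforced by the Remark's constraint $\theta_0\le\cos^{-1}(\eta/(2d_0))$, where the sector is small relative to $d_0$ and $c^2$ varies little over $\mathcal{B}_0$.
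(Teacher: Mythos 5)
Your proposal is correct and follows essentially the same route as the paper: the same decomposition $E_{\text{CO}}=E_0+E_0'$, the same conditioning on an empty/non-empty relay set with weights $\Pi_c$ and $1-\Pi_c$, the same identification of $\Phi_r^*$ as a PPP of intensity $\lambda[1-\Pi_c]$ giving $\mathbb{E}(I_0')=\Psi(\lambda[1-\Pi_c])$, and the same exact integral for $\mathcal{Z}$. The only cosmetic difference is in the last step: where you perform a heuristic mean-value substitution for $r$ and $\cos\theta$ inside $c^2$, the paper (Appendix~\ref{app_aver}) obtains the identical expression via a two-fold application of Jensen's inequality, which additionally pins down the direction of the approximation (a lower bound on the average attenuation).
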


\begin{proof}
From \eqref{total}, we have 

\begin{align}
E_{\text{CO}}&=E_{0}+E_0' \nonumber \\
&=\underbrace{E_{\text{NC}}}_{\text{direct link}}+\underbrace{\Pi_c\cdot  \mathbb{E}(P_r I_0')}_{\text{relaying link is inactive}} \nonumber \\
&\;\;\;\;\;+\underbrace{\big(1-\Pi_c\big) \cdot \mathbb{E} \big(P_r d(y^*)^{-\alpha}+P_r I_0' \big)}_{\text{relaying link is active}} \\
&=(1-\nu_d)P_t \big[d_0^{-\alpha}+\Psi(\lambda) \big]+\Pi_c\cdot P_r \Psi \big(\lambda [1-\Pi_c] \big) \nonumber \\
&\;\;\;\;\;+(1-\Pi_c)\cdot (1-\nu_r)P_r \bigg[\mathcal{Z}+\Psi(\lambda [1-\Pi_c]) \bigg],
\end{align}

\noindent where the proof of $\mathbb{E}(d(y^*)^{-\alpha})\approx \mathcal{Z}$ is reported in Appendix \ref{app_aver}; it is worth noting that Appendix \ref{app_aver} provides both the exact value of the average attenuation as well as the above simplified approximation. 
\end{proof}

It is worth noting that a similar optimization problem with \eqref{optimization} can be formulated for the relaying case; the objective function could be the minimization of the total transmitted power i.e., $\min P_t+P_r$. However, as it can be seen from Proposition 3, the expressions of the outage probability are complicated in the relaying case and do not allow elegant closed form solutions for the optimization problem.

\section{Numerical results}\label{num}
Computer simulations are carried out in order to evaluate the performance of the proposed schemes. The simulation environment follows the description in Sections \ref{system}, \ref{system2} with parameters\footnote{Unless otherwise defined.} $d_0=20$ distance units (we will use meters (m) for the sake of presentation), $r_0=4$ m, $\nu_d=\nu_r=0.3$,  $\sigma^2=\sigma_C^2=1$, $\Omega=-30$ dB, $\alpha=4$ and $\zeta=1$; the average harvested energy is measured in Watts. For the sake of simplicity, we assume that the relay nodes have similar computational/complexity characteristics with the transmitters (e.g., they could be inactive transmitters of the network) and therefore transmit with a power $P_r=P_t$.  For the cooperative protocol we assume $\eta=8$ m and thus Remark 1 corresponds to $\theta_0\leq \cos^{-1}(1/5)=0.4359 \pi$. The presented results concern the typical link $x_0\rightarrow r(x_0)$ but refer to any link of the network $\Phi_t\cup x_0$ (according to the Slivnyak's Theorem \cite[Sec. 8.5]{HAN2}).

\begin{figure}[t]
\centering
\subfigure[Outage probability.]{
  \includegraphics[width=\linewidth]{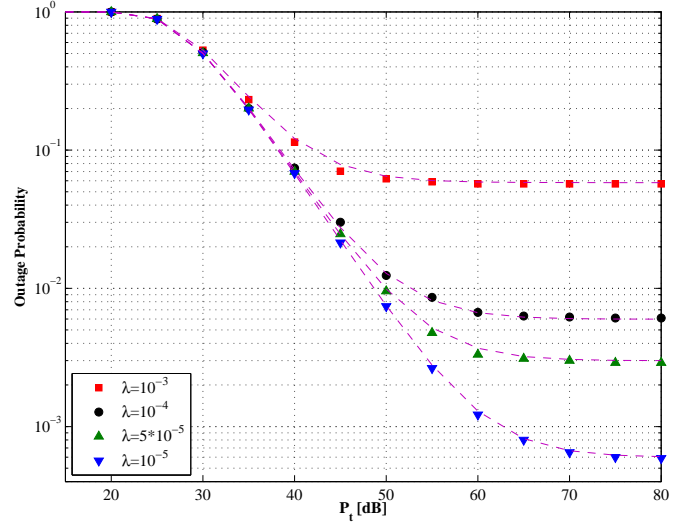}
  \label{fig1a}
 }
 \subfigure[Mean harvested energy.]{
 \includegraphics[width=\linewidth]{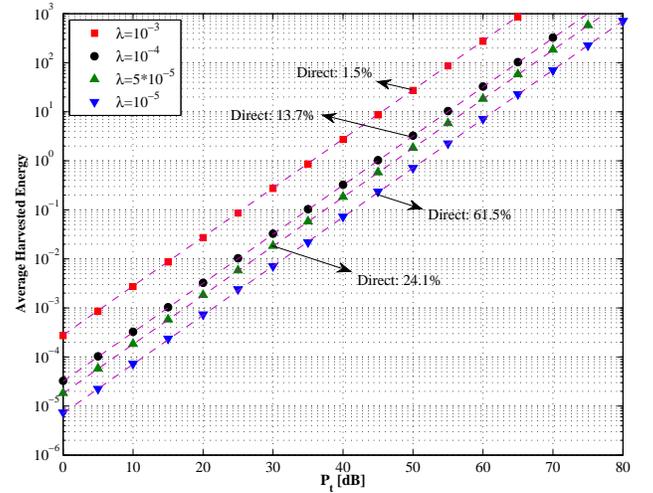}
   \label{fig1b}
 }
\vspace{-0.3cm}
\caption{Performance of the non-cooperative protocol versus $P_t$ for different network densities $\lambda$;  
$\sigma^2=\sigma_C^2=1$, $\Omega=-30$ dB, $r_0=4$ m, $d_0=20$ m, $\nu_d=0.3$ and $\alpha=4$.
Analytical results are shown with dashed lines.}
\end{figure}

\subsection{Non-cooperative protocol}

Fig.'s \ref{fig1a}, \ref{fig1b}  deal with the performance of the non-cooperative scheme for different network densities e.g.,  $\lambda=\{10^{-5}, 5\times 10^{-5}, 10^{-4}, 10^{-3} \}$. Specifically, Fig. \ref{fig1a} plots the outage probability of the system versus the transmitted power $P_t$. The first main observation is that the outage performance converges to a constant floor for high transmitted powers $P_t\rightarrow \infty$. This behavior is due to the fact that all nodes transmit with the same power without any coordination (scheduling) and therefore the system becomes interference limited as $P_t$ increases. As for the impact of the network density on the outage performance, it can be seen that as the density increases, the outage probability of the system increases; for $\lambda=10^{-5}$ the outage probability converges to $6\times 10^{-4}$, while for $\lambda=10^{-3}$ it converges to $6\times 10^{-2}$. This observation shows that the network density and the related multi-user interference significantly affects the decoding ability of the receivers. In the same Figure, we plot the analytical results given by \eqref{pout}, which  match with the simulation results and validate our analysis. On the other hand,  Fig. \ref{fig1b} plots the RF average harvested energy versus the transmitted power $P_t$. It can be seen that the RF average harvested energy is a linear function of the transmitted power (as it can be observed by \eqref{prop2e}) and increases as the transmitted power increases. By comparing the different curves, we can see that as the network density increases, the average harvested energy increases; e.g.,  if $P_t=45$ dB,  we have $E_{\text{NC}}=1$ Watt for $\lambda=10^{-4}$ and $E_{\text{NC}}=9$ Watt for  $\lambda=10^{-3}$. This remark shows that a dense network facilitates the RF energy harvesting process and thus interference is beneficial from an energy harvesting standpoint. In Fig. \ref{fig1b}, we also show the percentage of the harvested energy which is from the direct link. As it can be seen, for small network densities, the direct component significantly contributes to the total average harvested energy while becomes less important as the network density increases. For high network densities i.e., $\lambda=10^{-3}$, interference dominates the RF energy harvesting process and the percentage of the direct link is almost negligible i.e., $1.5$ \%. The theoretical curves perfectly match with the simulation results and validate our analysis in Proposition \ref{prop2}. It is worth noting that these two figures demonstrate the fundamental trade-off between information decoding and RF energy harvesting; interference significantly degrades the achieved outage performance, while it becomes helpful for the RF energy harvesting process.

\begin{figure}[t]
\centering
\subfigure[Outage probability.]{
  \includegraphics[width=\linewidth]{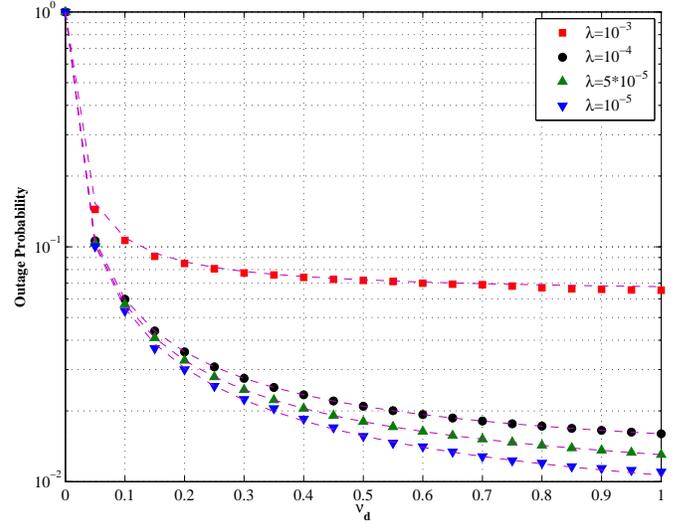}
  \label{fig2a}
 }
 \subfigure[Mean harvested energy.]{
 \includegraphics[width=\linewidth]{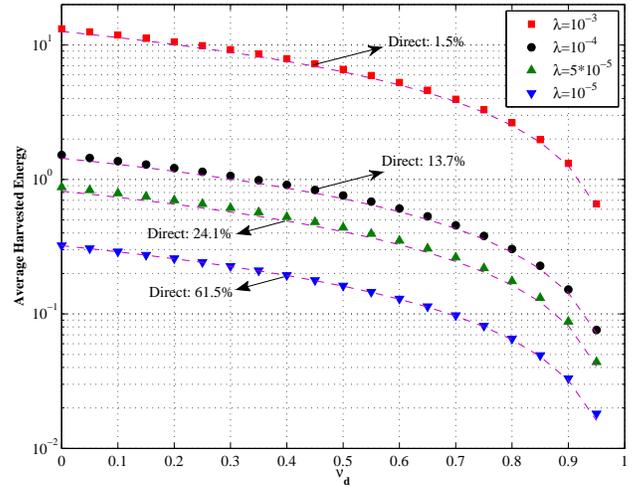}
   \label{fig2b}
 }
\label{figure2}
\vspace{-0.3cm}
\caption{Performance of the non-cooperative protocol versus $\nu_d$ for different network densities $\lambda$;  
$\sigma^2=\sigma_C^2=1$, $\Omega=-30$ dB, $r_0=4$ m, $d_0=20$ m, $P_t=45$ dB and $\alpha=4$.
Analytical results are shown with dashed lines.}
\end{figure}

\begin{table*}
    \centering
	    \caption{Optimal transmitted power (in Watt) for the non-cooperative protocol; $\lambda=10^{-5}$, $\sigma^2=\sigma_C^2=1$, $d_0=20$ m, $r_0=4$ m, $\Omega=-30$ dB and $\alpha=4$.}
	    \label{Tab}
        \begin{tabular}{|c| c| c| c |c|}
            \hline
        $\begin{array}{l} C_\text{I}=10^{-3}\\ C_{\text{H}}=10^3 \end{array}$ & Transmitted power $P_t$ & Power splitting $\nu_d$ & Outage Probability & Average harvested energy \\
        \hline
         Optimization I & $1.9652\times 10^{8}$ & $0.5$  & $6.0153\times 10^{-4}$ & $10^{3}$ \\
         \hline
          Optimization II  & $ 9.8661\times 10^{7}$ & $0.0041$  & $10^{-3}$ & $10^{3}$ \\
          \hline
          \hline
        \end{tabular}
        \begin{tabular}{|c| c| c| c |c|}
            \hline
        $\begin{array}{l} C_\text{I}=0.01\\ C_{\text{H}}=0.1 \end{array}$ & Transmitted power $P_t$ & Power splitting $\nu_d$ & Outage Probability  & Average harvested energy \\
        \hline
         Optimization I & $  5.0788\times 10^{4}$ & $0.5$  & $0.01$ & $0.2584$ \\
         \hline
          Optimization II  & $3.9470\times 10^{4}$ & $0.7511$  & $0.01$ & $0.1$ \\
          \hline
        \end{tabular}
\end{table*}

Fig.'s \ref{fig2a}, \ref{fig2b} show the impact of the power splitting ratio $\nu_d$ on the outage performance and the RF average harvested energy, respectively. We assume $P_t=45$ dB and the other parameters are similar to the previous simulation example. A higher $P_t$ affects (decreases) the outage probability in accordance with Fig.  \ref{fig1a} but does not change the observed behavior of the outage probability versus $\nu_d$ curves; therefore further simulation results with other $P_t$ parameters do not add value to the main remarks of the paper. It can be seen that the power splitting ratio significantly affects the performance of the system and defines the balance between the two conflicting objectives i.e., outage probability Vs energy harvesting. Specifically,  as $\nu_d$ increases the outage performance is improved while the harvesting process becomes less efficient, since most of the received energy is used for information decoding;  when $\nu_d$ decreases we have the inverse behavior, since most of the received energy is used for RF-to-DC rectification. Regarding the network's density $\lambda$, our observations confirm the previous main remarks.

In Table \ref{Tab}, we deal with the optimization problem discussed in Section \ref{optimization_sec}. Specifically, we present the optimal solution for the case of a constant splitting power ratio $\nu_0=0.5$ ({\it Optimization I}) as well as for the case where both parameters $P_t,\nu_d$ can be adjusted ({\it Optimization II}). The first observation is that {\it Optimization II} gives a solution $P_t^*$ which is much lower than this one of {\it Optimization I}, since the power spitting parameter is optimized accordingly. The solution of {\it Optimization II} satisfies both constraints with equality, as it has been discussed in  Section \ref{optimization_sec}. On the other hand, for the first set-up $(C_{\text{I}}=10^{-3},C_{\text{H}}=10^3)$, the solution of the {\it Optimization I} satisfies the harvesting constraint with equality, since $C_{\text{H}}$ is the dominant constraint. For the setup $(C_{\text{I}}=10^{-2},C_{\text{H}}=10^{-1})$,  the optimal solution satisfies the outage constraint with equality, since $C_{\text{I}}$ becomes the dominant constraint for this case. 

\begin{figure}[h!]
\centering
\subfigure[Outage probability.]{
  \includegraphics[width=\linewidth]{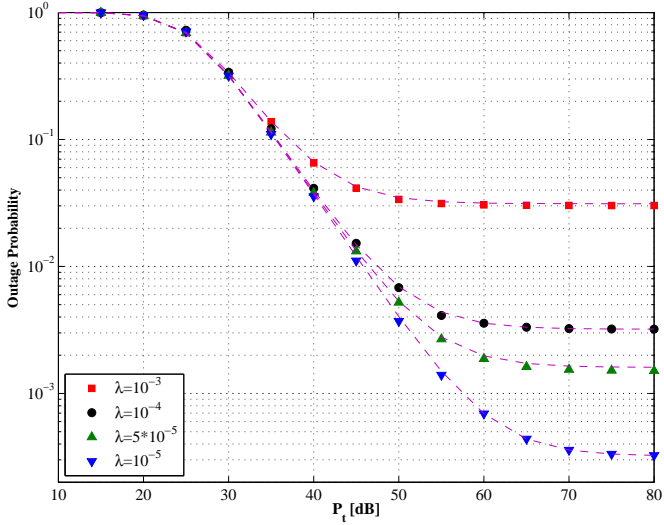}
  \label{fig3a}
 }
 \subfigure[Mean harvested energy.]{
 \includegraphics[width=\linewidth]{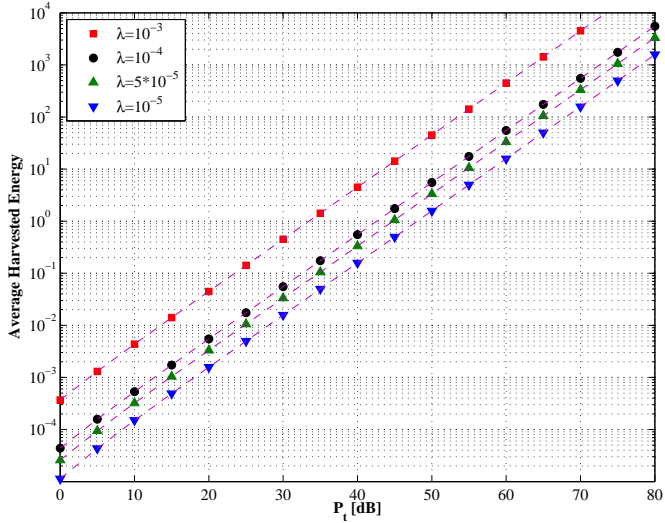}
   \label{fig3b}
 }
\label{figure3}
\vspace{-0.3cm}
\caption{Performance of the cooperative protocol versus $P_t$ for different network densities $\lambda$; $P_r=P_t$, $\lambda_r=10^{-2}$, $\sigma^2=\sigma_C^2=1$, $\Omega=-30$ dB, $r_0=4$ m, $\eta=8$ m, $\theta_0=\pi/3$,   $d_0=20$ m, $\nu_d=\nu_r=0.3$ and $\alpha=4$.
Analytical results are shown with dashed lines.}
\end{figure}

\subsection{Cooperative protocol}

Although our analysis is general and concerns any $\lambda$, $\lambda_r$, in the simulations results,  we assume that $\lambda_r$ is much higher than the network density $\lambda$ in order to demonstrate the potential gains from relaying at the outage/harvesting performance. More specifically, with a small $\lambda_r$ (e.g., $\lambda_r=10^{-4}$ or $10^{-5}$)  the probability of decoding at the relays (according to \eqref{prob_fst}) becomes almost zero and therefore we cannot show the impact of cooperation. On the other hand, as $\lambda$ increases, the multi-user interference significantly increases and the achieved outage probability of the system has not practical interest. Therefore, in order to reveal the potential benefits of cooperation, we assume a small $\lambda$ which ensures a low (non-cooperative) probability outage floor as well as a higher $\lambda_r$ which provides a non-empty relay set and therefore cooperative diversity. A further optimization of the network densities is an interesting  problem that could be considered for future work \cite{LEE}; here, we assume that network densities are fixed and can not be taken into account in the design. This network density setup with $\lambda<\lambda_r$ could refer to a bursty network with a small transmission probability, where relays are part of the same network and correspond to the inactive nodes \cite{MOH}.

Fig.'s \ref{fig3a}, \ref{fig3b} show the performance of the cooperative protocol in terms of outage probability and average harvested energy, for a simulation setup with $P_r=P_t$, $\nu_d=\nu_r=0.3$, $\lambda_r=10^{-2}$,  $\eta=8$ m and $\theta_0=\pi/3$ with $\theta_0\leq 0.4359\pi$ (Remark 1); the other parameters are defined as before. Specifically, Fig. \ref{fig3a} plots the outage probability versus the transmitted power $P_t$. As it can be seen the main observations are similar to the non-cooperative protocol and thus the outage probability converges to a constant outage floor for high $P_t$, since there is not any coordination/scheduling in both phases of the protocol. As for the network density $\lambda$, we can see that it significantly affects the outage performance of the system;  the associated multi-user interference degrades the decoder's performance at both the receivers  and  the relays in the first phase of the protocol. In the same figure, we plot the theoretical expressions given by \eqref{out_expr}; we can see that the theoretical results provide a near-perfect match to the simulations results and validate our analysis for the cooperative case.

A direct comparison between Fig.'s \ref{fig1a}, \ref{fig3a} for high $P_t$ (i.e., $P_t\rightarrow \infty$)\footnote{For the case $P_t\rightarrow \infty$, the comparison between non-cooperative and cooperative protocol is fair and the their performance gap is due to the cooperative diversity associated with the cooperative scheme.}, shows that the cooperative protocol improves the outage probability of the system and achieves a lower outage probability floor e.g., for $\lambda=10^{-5}$, the outage probability converges to an outage probability equal to $6\times 10^{-4}$ and $3\times 10^{-4}$, for the non-cooperative protocol and the cooperative protocol, respectively, (a more significant gain can be observed for another simulation setup as it is reported in the following discussion). In case of a successful decoding at the relay nodes (the relay set is not empty), the cooperative protocol provides a retransmission of the source's message from a shorter distance than the direct link as well as via an independent fading channel and therefore improves the achieved outage probability due to the cooperative diversity. In Fig. \ref{fig3b}, we plot the RF average harvested energy versus the transmitted power $P_t$. As it can be seen the relaying operation further improves the average harvested energy since the receivers can harvest energy from the relaying links. For $\lambda=10^{-3}$ and $P_t=60$ dB, the average harvested energy increases from $275$ Watt to $450$ Watt due to cooperation. It is worth noting that when a receiver has not relay assistance, it uses all the received power during the second phase of the cooperative protocol for energy harvesting.

In Fig. \ref{fig4}, we show the impact of the central angle $\theta_0$ and the relay density $\lambda_r$ on the outage performance of the cooperative protocol; we assume $\lambda_{r}=\{10^{-1},10^{-2},10^{-3}\}$, $\theta_0=\{\pi/3, \pi/2, \pi \}$, while the other simulation parameters follows the previous example. It can be seen that the combination $(\lambda_r,\theta_0)=(10^{-1},\pi/2)$ achieves the best outage performance for high $P_t$ (it converges to the lowest outage floor). This result reveals a very interesting relation between these two parameters as well as a multidimensional trade-off. Specifically, a high $\lambda_r$ ensures a non-empty relay set during the first phase of the cooperative protocol and provides cooperative diversity benefits. In this case, a smaller angle i.e., $\theta_0=\pi/2$ (Remark 1) can further protect the system from large-distance relays and therefore achieves a better outage probability performance than $\theta_0=\pi$. On the other hand, as $\lambda_r$ decreases, the probability to have  a non-empty relay set increases and a larger angle is required in order to still have a potential relay at the area of the transmitter; the condition in Remark 1 becomes less important, since successful relay decoding is the priority for the system.  The combination  $(\lambda_r,\theta_0)=(10^{-1},\pi/2)$ seems to provide the best balance between successful relay decoding and protection from  large-distance relays. In comparison to the non-cooperative protocol, the considered setting reveals a significant gain of the cooperative protocol against the non-cooperative scheme e.g., for $\lambda_r=0.1$ and $\theta_0=\pi/2$, the outage probability converges to $10^{-7}$ in comparison to  $6\times 10^{-4}$ reported in Fig. \ref{fig1a}. Finally, Fig. \ref{fig5} depicts the average harvested energy versus $P_t$.  We observe that a small angle is beneficial for the energy harvesting operation since an empty relay set allows the receiver to use all the received signal for energy harvesting in the second phase of the protocol. In addition, as $\lambda_r$ increases, the probability of relaying increases which is beneficial for the energy harvesting process. 

\begin{figure}[t]
\centering
\includegraphics[width=\linewidth]{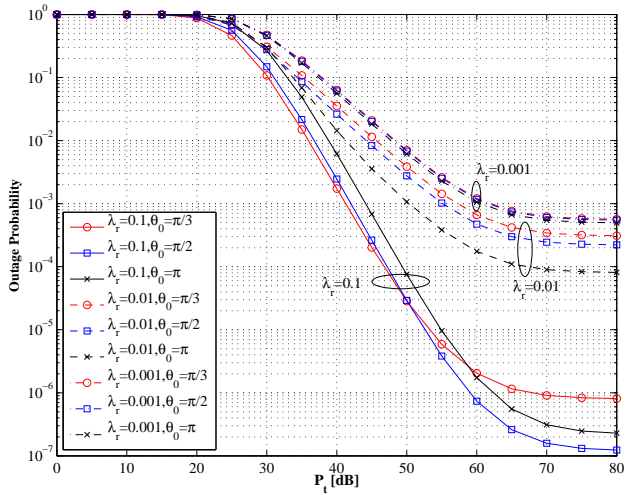}
\vspace{-0.5cm}
\caption{Outage probability versus $P_t$ for different $\lambda_r$ and $\theta_0$; $P_r=P_t$, $\lambda=10^{-5}$, $\lambda_{r}=\{10^{-1},10^{-2},10^{-3}\}$, $\sigma^2=\sigma_C^2=1$,  $\theta_0=\{\pi/3, \pi/2, \pi \}$, $\Omega=-30$ dB, $r_0=4$ m, $\eta=8$ m,   $d_0=20$ m, $\nu_d=\nu_r=0.3$ and $\alpha=4$.}
\label{fig4}
\end{figure}

\begin{figure}[t]
\centering
\includegraphics[width=\linewidth]{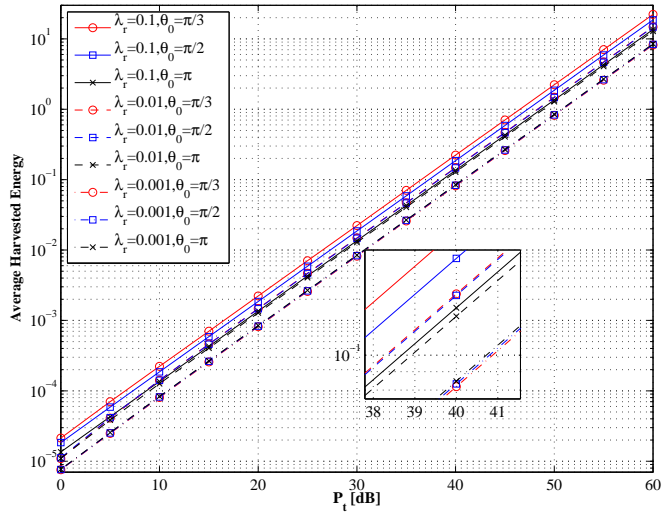}
\vspace{-0.5cm}
\caption{Average harvested energy versus $P_t$ for different $\lambda_r$ and $\theta_0$;  $P_r=P_t$, $\lambda=10^{-5}$, $\lambda_{r}=\{10^{-1},10^{-2},10^{-3}\}$, $\sigma^2=\sigma_C^2=1$,  $\theta_0=\{\pi/3, \pi/2, \pi \}$, $\Omega=-30$ dB, $r_0=4$ m, $\eta=8$ m,   $d_0=20$ m, $\nu_d=\nu_r=0.3$ and $\alpha=4$.}
\label{fig5}
\end{figure}

\section{Conclusion}\label{conc}

This paper has dealt with the PS harvesting technique in large-scale networks with multiple transmitter-receiver pairs, where receivers are characterized by both QoS and RF energy harvesting requirements. A non-cooperative scheme  where all transmitters simultaneously communicate with their associated receivers without any coordination, is analyzed in terms of outage performance and average harvested energy by using stochastic-geometry. We show that network density and power splitting ratio significantly affects the fundamental trade-off between outage performance and energy harvesting. For this case, an optimization problem that minimizes the transmitted power under outage probability and harvesting constraints is formulated and solved in closed form. In addition, a cooperative scheme where sources' transmissions are assisted by a random set of orthogonal relays is analyzed. A random relay selection policy is considered with a sectrorized selection area at the direction of the receivers. Analytical and simulation results reveal the impact of relay density and selection area on the achieved outage-probability/average harvested performance. An extension of this work is to integrate a coordination (scheduling) between the different transmissions and study the trade-off between energy harvesting and potential diversity gains. In addition, more sophisticated cooperative protocols and diversity combining schemes can also be considered in order to further boost the simultaneous information/energy transfer.

% =====================================================
%
%
% THE BIBLIOGRAPHY
%
%
% =====================================================

\appendices

\section{Outage probability for the non-cooperative protocol: Proof of Proposition I}\label{app1}

In order to calculate the outage probability for the non-cooperative protocol, we need to calculate the Laplace transform of the normalized interference term $I_0=\sum_{x\in \Phi_t}d(x)^{-\alpha}$. We have 
\begin{align}
\mathcal{L}_{I_0}(s)&=\mathbb{E}\big(\exp(-s I_0) \big)\nonumber \\
&=\mathbb{E} \left(\exp \left(-s\sum_{x\in \Phi_t}d(x)^{-\alpha} \right) \right) \nonumber \\
&=\mathbb{E} \left( \prod_{x\in \Phi_t}\exp \big(-s d(x)^{-\alpha} \big) \right) \nonumber \\
&=\mathbb{E} \left( \prod_{\substack{x\in \Phi_t, \\ \|x-r(x_0) \|>r_0}}\exp(-s d(x)^{-\alpha}) \right) \nonumber \\
&\;\;\;\;\times \mathbb{E} \left( \prod_{\substack{x\in \Phi_t, \\ \|x-r(x_0) \| \leq r_0}}\exp(-s r_0^{-\alpha}) \right) \nonumber \\
&= \mathcal{L}_{I_0}'(s)  \big[\exp \left(-sr_0^{-\alpha} \right)\big]^{\mathbb{E}\big(N(b(0,r_0)) \big)} \nonumber \\
&= \mathcal{L}_{I_0}'(s) \exp \left(-s  \pi \lambda r_0^{2-\alpha} \right),\label{teliko}
\end{align}

\noindent where $\mathbb{E}\big(N(b(0,r_0)) \big)=\lambda \pi r_0^2$ denotes the average number of points $x_k \in \Phi_t$ falling in a disk of radius $r_0$ \cite[2.4.2]{HAN2}.  For the computation of $\mathcal{L}_{I_0}'(s)$, we have

\begin{subequations}
\begin{align}
\mathcal{L}_{I_0}'(s)&=\mathbb{E} \left( \prod_{\substack{x\in \Phi_t \\ \|x-r(x_0) \| > r_0}}\exp\left(-s d(x)^{-\alpha} \right) \right)  \label{e1} \nonumber \\
&=\exp \bigg(-\lambda \int_{\mathcal{R}} \bigg(1-\exp \left(-s r^{-\alpha} \right) \bigg)dr \bigg)  \\
&=\exp \left(-\lambda \int_{-\pi}^{\pi}\!\int_{r_0^\alpha}^{\infty}\frac{1}{\alpha} \left(1-\exp\left(-\frac{s}{y} \right) \!\right)\! y^{\delta-1}dy d\theta  \right) \label{e2} \\
&=\exp\left(- 2\lambda\pi  \int_{0}^{r_0^{-\alpha}} \frac{1}{\alpha}\bigg(1-\exp(-su)  \bigg)u^{-\delta-1} du \right) \label{e3} \\
&=\exp \Bigg(-\lambda \pi \Bigg[\bigg(\exp \left(-s r_0^{-\alpha} \right)-1\bigg)r_0^2 \nonumber \\
&\;\;\;\;\;+s^{\delta}\gamma \left(1-\delta, s r_0^{-\alpha} \right) \Bigg] \Bigg), \label{e4}
\end{align}
\end{subequations}

\noindent where $\mathcal{R}\triangleq\{ r_0 \leq r,\; \theta \in [-\pi,\pi] \}$ denotes the integration area, \eqref{e1} follows from the probability generating functional of a PPP \cite[Sec. 4.6]{HAN2},  \eqref{e2} by using the transformation $y\leftarrow r^{\alpha}$,  \eqref{e3} by using the transformation $u\leftarrow y^{-1}$, and \eqref{e4} from integration by parts;  $\gamma(n,\beta)\triangleq\int_{0}^{\beta}y^{n-1}\exp(-y)dy$ is the lower incomplete gamma function \cite{GRA}. 

The outage probability for the typical transmitter-receiver link $x_0\rightarrow r(x_0)$ can be written as 

\begin{subequations}
\begin{align}
P_{\text{out}}&=1-\mathbb{P}\left(\frac{\nu_d P_t h_0 d_0^{-\alpha}}{\nu_d(\sigma^2+P_t I_0)+\sigma_C^2} \geq \Omega  \right) \nonumber \\
&=1-\mathbb{P}\left( h_0 \geq \frac{\Omega d_0^{\alpha} \sigma^2}{P_t}+\frac{\Omega d_0^{\alpha} \sigma_C^2}{\nu_d P_t}+\Omega d_0^{\alpha} I_0 \right) \nonumber \\
&=1-\mathbb{E} \exp\left( -\frac{\Omega d_0^{\alpha} \sigma^2}{P_t}-\frac{\Omega d_0^{\alpha} \sigma_C^2}{\nu_d P_t}-\Omega d_0^{\alpha} I_0  \right) \label{p1} \\
&=1- \exp \left( -\frac{\Omega d_0^{\alpha} \sigma^2}{P_t}-\frac{\Omega d_0^{\alpha} \sigma_C^2}{\nu_d P_t}     \right)\underbrace{\mathbb{E} \exp(-\Omega d_0^\alpha I_0)}_{\mathcal{L}_{I_0}(\Omega d_0^\alpha)},  \label{p2}
\end{align}
\end{subequations}

\noindent where \eqref{p1} follows from the cumulative distribution function of an exponential random variable with unit variance $F_X(x)=1-\exp(-x)$ and the Laplace transform in \eqref{p2} is given by \eqref{teliko}. It is worth noting that although the above analytical method is similar to several stochastic geometry works e.g., \cite{WEB1,HAN2}, our analysis/result concerns a different problem and is based on different system assumptions.

\section{Mean of the interference term $I_0$}\label{app2}

Let $\Phi_t$ be a PPP with density $\lambda$ and let $I_0=\sum_{x\in \Phi_t}d(x)^{-\alpha}$;  by using Campbell's theorem for the expectation of a sum over a point process \cite[4.2]{HAN2}, we have: 

\begin{align}
\mathbb{E}(I_0)&=\mathbb{E}\left(\sum_{\substack{x\in \Phi_t, \\ \|x-x_0 \|>r_0}}d(x)^{-\alpha}  \right)+\mathbb{E}\left(\sum_{\substack{x\in \Phi_t \\ \|x-x_0 \|\leq r_0}}r_0^{-\alpha}  \right) \nonumber \\
& =\lambda \int_{-\pi}^{\pi} \int_{r_0}^{\infty}r^{-\alpha} r dr d\theta+ \mathbb{E} \bigg(N(b(0,r_0)) \bigg)r_0^{-\alpha} \nonumber \\
&=\frac{2\pi\lambda r_0^{2-\alpha}}{\alpha-2}+\lambda \pi  r_0^{2-\alpha} \nonumber \\
&=\pi \lambda r_0^{2-\alpha}\frac{\alpha}{\alpha-2},
\end{align}

\noindent where $\mathbb{E} \bigg(N(b(0,r_0)) \bigg)=\lambda \pi r_0^2$.

\section{Selection sector $\mathcal{B}_k$- central angle}\label{angle_sec}

We define as  $\theta\triangleq \angle\; y \widehat{x_k}r(x_k)$ the angle which is formed by the relay node $y$, the transmitter $x_k$ and the receiver $r(x_k)$, $r\triangleq d(x_k,y)$ and $c\triangleq d(y,r(x_k))$, as depicted in Fig. \ref{model2}. By using the cosine rule, the requirement that the relay-receiver distance should be shorter than $d_0$ gives:
\begin{align}
&r^2+d_0^2-2r d_0 \cos\theta\leq d_0^2 \nonumber \\
&\Rightarrow \theta \in \left[-\cos^{-1}\left(\frac{r}{2d_0} \right),\;\cos^{-1}\left(\frac{r}{2d_0} \right)  \right].
\end{align}
In the case where the selection area is a sector with a constant central angle, by applying the above condition to the border of the sector (i.e., for a distance $\eta$), we have
\begin{align}
\theta \in \left[-\cos^{-1}\left(\frac{\eta}{2d_0} \right),\;\cos^{-1}\left(\frac{\eta}{2d_0} \right)  \right].
\end{align}
It is worth noting that the above condition gives the maximum range of the angle; any angle defined in this range, it also supports the distance requirement.

\section{Outage probability for the first hop (empty relay set)- $\Pi_c(P_t)$}\label{app_c1}

Let $r$ be the distance between transmitter and relay.  The relay nodes that are able  to successfully decode the source's signal form the  point process $\Phi_{r}'$, which is generated by the homogeneous PPP process $\Phi_{r}$ by applying a thinning procedure \cite[2.7.3]{HAN2}; therefore $\Phi_{r}'$ is a PPP with intensity 
\begin{align}
\lambda_{r'}(x)&= \lambda_r \mathbb{E}\bigg(\mathbf{1} \big(x_0\rightarrow y_k | \Phi_t \big) \bigg)\\
&=\left\{ \begin{array}{l} \lambda_r \exp\left(-\frac{\sigma^2 \Omega r^{\alpha}}{P_t} \right)\Xi(\lambda,r,r_0)\;\;\;\text{If}\; r> r_0 \\ \lambda_r \exp\left(-\frac{\sigma^2 \Omega r_0^{\alpha}}{P_t} \right)\Xi(\lambda,r_0,r_0)\;\text{If}\; r\leq r_0 ,  \end{array} \right.
\end{align}

\noindent where for the above expression we have used the expression in Proposition \ref{prop1} for a direct distance equal to $r$ and $\sigma_C^2=0$. If we focus on the typical transmitter, the mean of $\Phi_{r}'$ inside the area $\mathcal{B}_0$ is equal to 
\begin{align}
\mu_{r'}(\mathcal{B}_0)&=\int_{\mathcal{B}_0}\lambda_{r'}(x)dx \nonumber \\
&=\int_{\mathcal{B}_0} \lambda_r\! \exp\left(-\frac{\sigma^2 \Omega r^{\alpha}}{P_t} \right)\Xi(\lambda,r,r_0)dx \nonumber \\
&\;\;\;\;\;+\mathbb{E}\big[N(\mathcal{B}_0) \big]\lambda_r \exp\left(-\frac{\sigma^2 \Omega r_0^{\alpha}}{P_t} \right)\Xi(\lambda,r_0,r_0) \nonumber \\
&=\int_{-\theta_0}^{+\theta_0}\!\!\int_{r_0}^{\eta}  \lambda_r \exp\left(-\frac{\sigma^2 \Omega r^{\alpha}}{P_t} \right)\Xi(\lambda,r,r_0)r dr d\theta \nonumber \\
&\;\;\;\;\;+\lambda_r \theta_0 r_0^2 \exp\left(-\frac{\sigma^2 \Omega r_0^{\alpha}}{P_t} \right)\Xi(\lambda,r_0,r_0)
\end{align}

By using fundamental properties of a PPP process \cite[2.4.3]{HAN2},  the probability to have an empty relaying set is equal to
\begin{align}
\Pi_c(P_t)&=\mathbb{P}\{N(\mathcal{B}_0)=0 \} =\exp\big(-\mu_{r'}(\mathcal{B}_0)  \big).
\end{align} 

\section{Outage probability for the relaying hop- $\Pi_r(\nu_r,P_r)$}\label{app_c2}

By using the cosine rule, the distance relay-receiver can be expressed as $c \triangleq \sqrt{r^2+d_0^2-2rd_0 \cos(\theta)}$, where $r$ denotes the distance transmitter-relay (see also Fig. \ref{model2}). In the case of a relaying transmission, the interference at each receiver is generated by all selected relays which form a  PPP $\Phi_{r^*}$ with density $\lambda (1-\Pi_c)$ (i.e., one relay is selected for each transmitter with probability $(1-\Pi_c)$). For the outage probability of the relaying link, we can apply the derived expressions for the direct link as follows
\begin{align}
\Pi_r(\nu_r,P_r)&=1-\frac{1}{|\mathcal{B}_k|}\int_{\mathcal{B}_k}\exp\left(-\frac{\sigma^2 \Omega c^{\alpha}}{P_r} \right) \exp\left(-\frac{\sigma_C^2 \Omega c^{\alpha}}{\nu_r P_r} \right) \nonumber \\
&\;\;\;\;\;\times \Xi(\lambda (1-\Pi_c), c, r_0)dx \nonumber \\
&=1-\frac{1}{\theta_0(\eta^2-r_0^2)}\int_{-\theta_0}^{\theta_0}\int_{r_0}^{\eta}\exp\left(-\frac{\sigma^2 \Omega c^{\alpha}}{P_r} \right)  \nonumber \\
&\;\;\;\;\;\times \exp\left(-\frac{\sigma_C^2 \Omega c^{\alpha}}{\nu_r P_r} \right) \Xi(\lambda (1-\Pi_c), c, r_0)r dr d\theta, 
\end{align}

\noindent where $|\mathcal{B}_k|=\theta_0\eta^2-\theta_0 r_0^2$ gives the area of $\mathcal{B}_k$. 
We note that the above expression takes into account that the smallest distance between a communication pair is $r_0$; the points of $\mathcal{B}_k$ with $r<r_0$ are considered to have a distance $r_0$ according to the considered radio propagation model in \eqref{modelo_apostasis}.

\section{Average relaying attenuation}\label{app_aver}

Let $c=d(y,r(x_k))$ the distance relay-receiver and $r=d(x_k,y)$ the distance transmitter-relay; by using the cosine rule (see also Fig. \ref{model2}), we have $c^2=r^2+d_0^2-2r d_0 \cos(\theta)$. 
For a selection area $\mathcal{B}_k=\{r \in [0\;\eta],\;\theta\in [-\theta_0,\;\theta_0] \}$,  the average relay-receiver attenuation can be expressed as
\begin{align}
&\mathbb{E}(c^{-\alpha})=\mathbb{E}\big( r^2+d_0^2-2r d_0 \cos(\theta)\big)^{-\delta} \nonumber \\
&=\frac{1}{|\mathcal{B}_k|} \int_{\mathcal{B}_k}\big( r^2+d_0^2-2r d_0 \cos(\theta)\big)^{-\delta}dx \nonumber \\
&=\frac{1}{\theta_0 (\eta^2-r_0^2)}\int_{-\theta_0}^{\theta_0}\int_{r_0}^{\eta}\big( r^2+d_0^2-2r d_0 \cos(\theta)\big)^{-\delta}  r dr d\theta,
\end{align} 

\noindent where the above expression takes into account the radio propagation model in \eqref{modelo_apostasis}. In order to have a simple expression for the average relaying attenuation, we apply Jensen's inequality:
\begin{subequations}
\begin{align}
\mathbb{E}(c^{-\alpha})&=\mathbb{E}\big(r^2+d_0^2-2rd_0 \cos(\theta) \big)^{-\delta} \nonumber \\
&\geq \big(\mathbb{E}(r^2+d_0^2-2rd_0 \cos(\theta)) \big)^{-\delta}  \label{a1}\\
&\geq \bigg(  \big(\mathbb{E}(r)\big)^2+d_0^2-2d_0\mathbb{E}(r)\mathbb{E}\big(\cos(\theta) \big)\bigg)^{-\delta}  \label{a2}\\
&=\bigg(\left(\frac{r_0+\eta}{2}\right)^2+d_0^2-2d_0 \frac{r_0+\eta}{2}\cdot \frac{\sin(\theta_0)}{\theta_0}  \bigg)^{-\delta},  \label{a3}
\end{align}
\end{subequations}
where \eqref{a1} holds due to the convexity of the functions $f(x,\theta)=(x^2+d_0^2-2xd_0 \cos(\theta))^{-\delta}$,  \eqref{a2} holds due to the convexity of $f(x)=x^2$ and $\mathbb{E}[\cos(\theta)]=\sin(\theta_0)/\theta_0$ in \eqref{a3}.

\begin{IEEEbiography}[{\includegraphics[width=1in,height=1.25in,clip,keepaspectratio]{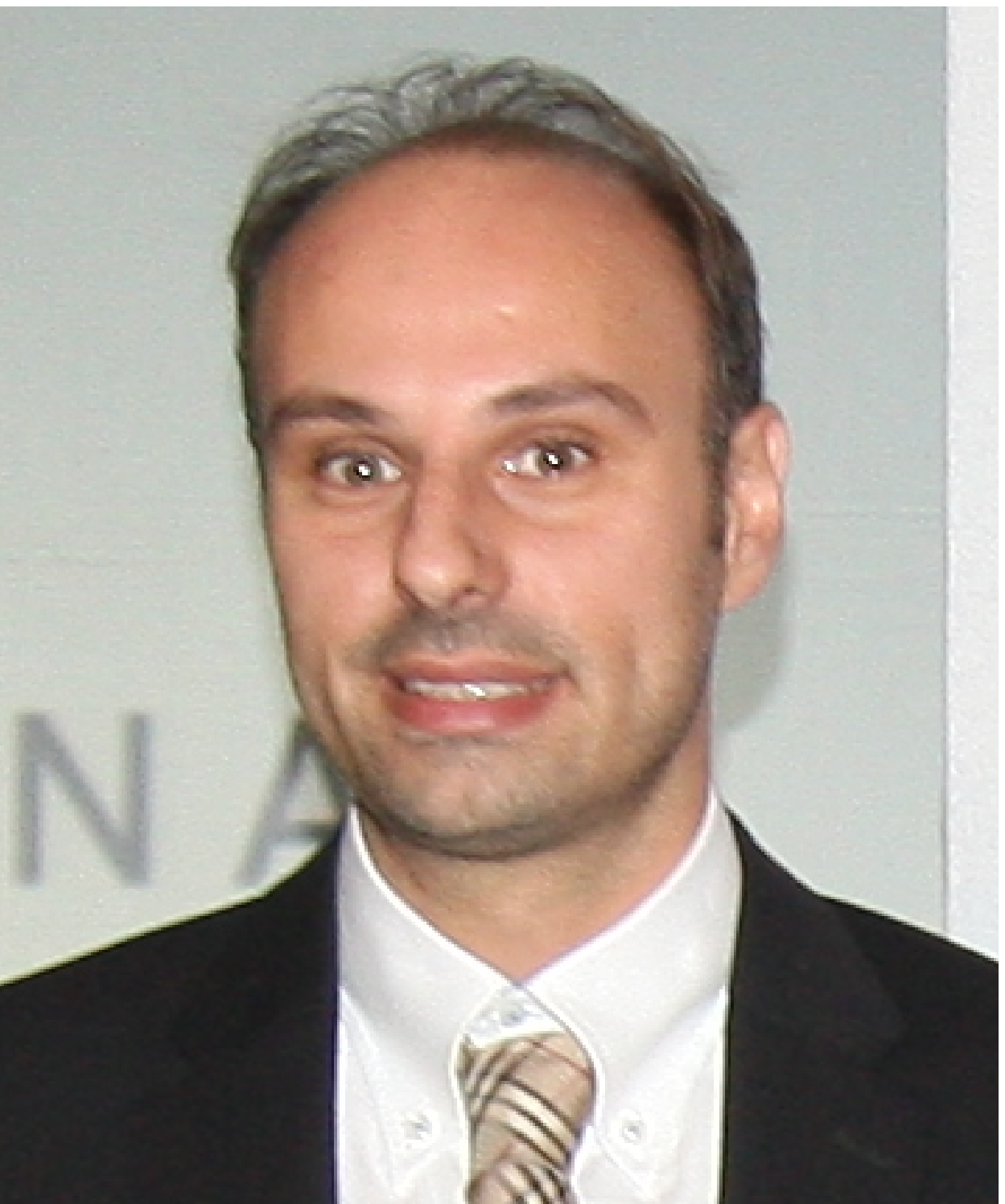}}]{Ioannis Krikidis} (S'03-M'07-SM'12) received the diploma in Computer Engineering from the Computer Engineering and Informatics Department (CEID) of the University of Patras, Greece, in 2000, and the M.Sc and Ph.D degrees from Ecole Nationale Sup\'erieure des T\'el\'ecommunications (ENST), Paris, France, in 2001 and 2005, respectively, all in electrical engineering. From 2006 to 2007 he worked, as a Post-Doctoral researcher, with ENST, Paris, France, and from 2007 to 2010 he was a Research Fellow in the School of Engineering and Electronics at the University of Edinburgh, Edinburgh, UK. He has held also research positions at the Department of Electrical Engineering, University of Notre Dame; the Department of Electrical and Computer Engineering, University of Maryland; the Interdisciplinary Centre for Security, Reliability and Trust, University of Luxembourg; and the Department of Electrical and Electronic Engineering, Niigata University, Japan. He is currently an Assistant Professor at the Department of Electrical and Computer Engineering, University of Cyprus, Nicosia, Cyprus. His current research interests include information theory, wireless communications, cooperative communications, cognitive radio and secrecy communications.

Dr. Krikidis serves as an Associate Editor for the IEEE WIRELESS COMMUNICATIONS LETTERS, IEEE TRANSACTIONS ON VEHICULAR TECHNOLOGY and Elsevier TRANSACTIONS ON EMERGING TELECOMMUNICATIONS TECHNOLOGIES.  He was the Technical Program Co-Chair for the IEEE International Symposium on Signal Processing and Information Technology 2013. He received an IEEE COMMUNICATIONS LETTERS and an IEEE WIRELESS COMMUNICATIONS LETTERS exemplary reviewer certificate in 2012. He was the recipient of the {\it Research Award Young Researcher} from the Research Promotion Foundation, Cyprus, in 2013.
\end{IEEEbiography}

% -------------------------------------------------------------
% THE END
% -------------------------------------------------------------
\end{document}